\newcommand{\Gb}{{\bar{G}}}
\newcommand{\Fb}{{\bar{F}}}
\newcommand{\xb}{{\boldsymbol{ x}}}
\newcommand{\al}{\alpha}
\newcommand{\be}{\beta}
\newcommand{\la}{\lambda}
\newcommand{\lab}{\boldsymbol{\la}}
\newcommand{\pb}{\boldsymbol{p}}
\newcommand{\ub}{\boldsymbol{u}}
\newcommand{\bq}{\begin{equation}}
\newcommand{\eq}{\end{equation}}
\newcommand{\bqs}{\begin{equation*}}
\newcommand{\eqs}{\end{equation*}}
\newcommand{\bqa}{\begin{eqnarray}}
\newcommand{\eqa}{\end{eqnarray}}
\newcommand{\bqas}{\begin{eqnarray*}}
\newcommand{\eqas}{\end{eqnarray*}}
\newcommand{\bc}{\begin{cases}}
\newcommand{\ec}{\end{cases}}
\newcommand{\bt}{\begin{thm}}
\newcommand{\et}{\end{thm}}
\newtheorem{theorem}{Theorem}[section]
\newtheorem{lemma}{Lemma}[section]
\newtheorem{corollary}{Corollary}[section]
\newtheorem{example}{Example}[section]
\newtheorem{definition}{Definition}[section]
\title{Ordering the smallest claim amounts from two sets of interdependent heterogeneous portfolios}
\author{ Hossein Nadeb, Hamzeh Torabi, Ali Dolati\\
Department  of Statistics, Yazd University,  Yazd, Iran,\\
}
\begin{document}
\date{}
\maketitle
\begin{abstract}
Let $ X_{\la_1},\ldots,X_{\la_n}$ be a set of dependent and non-negative
random variables share a survival copula and let
$Y_i= I_{p_i}X_{\la_i}$, $i=1,\ldots,n$, where $I_{p_1},\ldots,I_{p_n}$ be independent Bernoulli random variables independent of
$X_{\la_i}$'s, with ${\rm E}[I_{p_i}]=p_i$, $i=1,\ldots,n$. In
actuarial sciences, $Y_i$ corresponds to the claim amount in a
portfolio of risks. This paper considers comparing the smallest claim amounts from two sets of
interdependent portfolios, in the
sense of usual and likelihood ratio orders, when the variables in one set have the parameters
$\lambda_1,\ldots,\lambda_n$ and $p_1,\ldots,p_n$ and the variables in the other set  have
the parameters $\lambda^{*}_1,\ldots,\lambda^{*}_n$ and $p^*_1,\ldots,p^*_n$. Also, we present some bounds for survival function of the smallest claim amount in a portfolio. To illustrate validity of the results, we serve some applicable models.

\end{abstract}
{\bf Keywords} 
Copula, Majorization, Smallest claim amount, Stochastic order.\\


\section{Introduction}
Suppose that $X_{\la_1},\ldots,X_{\la_n}$, assuming $X_{\la_i}$ has the survival function $\Fb(x;\la_i)$, are non-negative random variables denoting the total random severities of $n$ policyholders in an insurance period. Further, let $I_{p_1},\ldots,I_{p_n}$ be a set of independent
Bernoulli random variables, with $I_{p_i}$ is corresponding to $X_{\la_i}$, such that
$I_{p_i}=1$ whenever the $i$th policyholder makes random claim amount
$X_{\la_i}$ and $I_{p_i}=0$ whenever does not make a claim. In this
notation, $Y_i= I_{p_i} X_{\la_i}$ is the claim amount related to $i$th policyholder and $(Y_1,\ldots,Y_n)$ is said to be a portfolio of
risks. Further, consider another portfolio of risks $(Y^*_1,\ldots,Y^*_n)$ with the parameter vectors $(\la^*_1,\ldots,\la^*_n)$ and $(p^*_1,\ldots,p^*_n)$.

The annual premium is the amount received by the insurer which is the primary cost to accept the risk. Determining the
annual premium is very important problem for the insurance companies. Therefore, deriving preferences
between random future gains or losses is an appealing topic for the actuaries. For this purpose, stochastic orders are very
helpful. Stochastic orders have been extensively used in the areas management science, financial economics,
insurance, actuarial science, operation research, reliability
theory, queuing theory and survival analysis. For a comprehensive discussions on
stochastic orders, one may refer to M\"{u}ller and Stoyan\cite{must}, Shaked and Shanthikumar\cite{ss} and Li and Li\cite{lll}.

The problem of orderings of some statistics in the portfolios $(Y_1,\ldots,Y_n)$ and $(Y^*_1,\ldots,Y^*_n)$, such as
the number of claims, $\sum_{i=1}^n I_{p_i}$, the aggregate claim amounts, $\sum_{i=1}^n Y_i$, the smallest, $Y_{1:n}=\min(Y_1,\ldots,Y_n)$,
and the largest claim amounts, $Y_{n:n}=\max(Y_1,\ldots,Y_n)$, have been
discussed in many researches; see, e.g., Karlin and
Novikoff\cite{kar}, Ma\cite{ma}, Frostig\cite{fro}, Hu and Ruan\cite{huru}, Denuit and Frostig\cite{defr}, Khaledi and Ahmadi\cite{khah}, Zhang and Zhao\cite{zz}, Barmalzan et al.\cite{bar1}, Li and Li\cite{lili}, Barmalzan et al.\cite{bar2018}, Barmalzan and Najafabadi\cite{bana}, Barmalzan et al.\cite{bar3}, Barmalzan et al.\cite{bar2}, Balakrishnan et al.\cite{baet} and Li and Li\cite{lili2}.

The most of published articles consider the case that the severities are independent, while sometimes this assumption is not satisfied and many of policies are simultaneously at risk, such as when earthquakes or epidemics occur. Here, the severities have a positive dependence.

In this paper, it is assumed that $X_{\la_1},\ldots,X_{\la_n}$ are non-negative and continuous random variables with the joint survival
function $\bar{H}(x_1,\ldots,x_n)$, marginal survival functions $\Fb(x;\la_1),\ldots,\Fb(x;\la_n)$, and the survival copula $C_R$ through the relation $\bar{H}(x_1,\ldots,x_n)=C_R(\Fb(x_1;\la_1),\ldots,\Fb(x_n;\la_n))$ in the view of the Sklar's Theorem; see Nelsen\cite{nel}. Here, we compare the smallest claim amounts arising from two sets of interdependent heterogeneous portfolios and then mainly focus on presenting some bounds for the survival function of the smallest claim amount in a set of interdependent heterogeneous portfolio.

The rest of the paper is organized as follows. In Section \ref{sec2}, we recall some definitions
and lemmas which will be used in the sequel. Subsection \ref{sub1} provides orderings of the smallest
claim amounts from two interdependent heterogeneous portfolios of risks for a general model in the sense
of the usual stochastic order. Also, it considers the proportional hazard rate model and provides some characterizations on the likelihood ratio order of the smallest claim amounts under some certain conditions. Subsection \ref{sub2} presents some useful lower and upper bounds for the survival function of the smallest claim amount and it establishes some numerical examples to illustrate the validity of the shown results.

\section{The basic definitions and some prerequisites}\label{sec2}
In this section, we state some notions of stochastic orders,
majorization, weak majorization and  some lemmas which are needed to prove our main results.
Throughout the paper, we use the notations $ \Bbb R =(-\infty,+\infty) $, $ \Bbb R_+ =[0,+\infty) $ and $\bar{x}=\frac{1}{n}\sum_{i=1}^n x_i$. Also, we use the notion of increasingness for a function $g:\mathscr{A}\rightarrow {\Bbb R}$, $ \mathscr{A}\subseteq {\Bbb R}^{n} $, if it is non-decreasing in each argument. Similarly,  the notion of decreasingness is used, when $g$ is non-increasing in each argument.

Let $  X $ and $ Y $ be two non-negative random variables with the distribution functions $ F $ and $ G $, the survival functions $ \bar{F}=1 - F  $ and $ \bar{G}=1 - G  $, the density functions $f$ and $g$ and the hazard rate functions $ r_{X}={f}/{\bar{F}} $ and $ r_{Y}={g}/{\bar{G}} $, respectively.

\begin{definition}
$ X $ is said to be smaller than $ Y $ in the 
\begin{itemize}
\item[\rm (i)] usual stochastic order, denoted by $ X \leq_{\rm st} Y $, if $ \bar{F}(x)\leq\bar{G}(x) $ for all  $ x \in \Bbb R $;
\item[{\rm(ii)}] hazard rate order, denoted by $ X\leq_{\rm hr}Y $, if $ r_{Y} (x) \leq r_{X}(x) $ for all $ x \in \Bbb R $;
\item[{\rm(iii)}] likelihood ratio order, denoted by $ X\leq_{\rm lr}Y $, if $\frac{g(x)}{f(x)}$ is increasing in $ x \in \mathbb{R}_+ $.
\end{itemize}
\end{definition}

For a comprehensive discussion on various stochastic orders, we
refer to M\"{u}ller and Stoyan\cite{must}, Li and Li\cite{lll} and Shaked and Shanthikumar\cite{ss}.

The concepts of majorization of vectors and Schur-convexity and Schur-concavity of functions are also needed. For a
comprehensive discussion of these topics we refer to Marshall et
al. \cite{met}. We use the notation $ x_{1:n}\leq x_{2:n}\leq
\ldots \leq x_{n:n}$ to denote the increasing arrangement of components of the
vector $ \boldsymbol{x} = (x_{1}, \ldots ,x_{n})$.
\begin{definition}
 The vector $ \boldsymbol{x} $ is said to be
\begin{itemize}
\item[ (i)] weakly submajorized by the vector $ \boldsymbol{y} $ (denoted by $ \boldsymbol{x}\preceq_{\rm w}\boldsymbol{y} $) if
$\sum_{i=j}^{n}x_{i:n}\leq \sum_{i=j}^{n}y_{i:n}$ for all $j = 1, \ldots , n $,

\item[ (ii)] weakly supermajorized by the vector $ \boldsymbol{y} $ (denoted by $ \boldsymbol{x}\mathop \preceq \limits^{{\mathop{\rm w}} }\boldsymbol{y} $) if $ \sum_{i=1}^{j}x_{i:n}\geq \sum_{i=1}^{j}y_{i:n} $ for all $ j = 1, \ldots , n $,

\item[ (iii)] majorized by the vector $ \boldsymbol{y} $ (denoted by $ \boldsymbol{x}\mathop \preceq \limits^{{\mathop{\rm m}} }\boldsymbol{y} $) if $ \sum_{i=1}^{n}x_{i}= \sum_{i=1}^{n}y_{i}$ and $\sum_{i=1}^{j}x_{i:n}\geq \sum_{i=1}^{j}y_{i:n}$ for all $j = 1, \ldots , n-1 $.
\end{itemize}
\end{definition}
\begin{definition}
A real valued function $ \phi $ defined on a set $ \mathscr{A}\subseteq {\Bbb R}^{n} $ is said to be Schur-convex (Schur-concave) on $ \mathscr{A} $ if

\[
\boldsymbol{x} \mathop \preceq \limits^{{\mathop{\rm m}} }\boldsymbol{y} \quad \text{on}\quad \mathscr{A} \Longrightarrow \phi(\boldsymbol{x})\leq (\geq)\phi(\boldsymbol{y}).
\]
\end{definition}

\begin{lemma}[Marshall et al.\cite{met}, Theorem 3.A.8]\label{mkl}
A real valued function $\phi$ defined on a set $\mathscr{A}\subseteq \mathbb{R}^n$ satisfies
\begin{equation*}
\phi(\xb)\leq \phi(\boldsymbol{y})\quad \text{whenever}~~  \boldsymbol{x}\mathop \preceq \limits^{{\mathop{\rm w}} }\boldsymbol{y}~~  \text{on}~~ \mathscr{A},
\end{equation*}
if, and only if, $\phi$ is decreasing and Schur-convex on $\mathscr{A}$.
\end{lemma}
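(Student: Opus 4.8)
The statement is an equivalence, so the plan is to treat the two implications separately; the nontrivial ingredient is a decomposition of weak supermajorization through ordinary majorization. The tool I would invoke is the classical fact (Marshall et al.\cite{met}) that $\xb \mathop \preceq \limits^{{\mathop{\rm w}}}\boldsymbol{y}$ holds if and only if there exists a vector $\ub$ with $\ub\leq\xb$ (componentwise) and $\ub\mathop \preceq \limits^{{\mathop{\rm m}}}\boldsymbol{y}$. Intuitively, one first redistributes $\boldsymbol{y}$ into a vector $\ub$ having the same total sum but majorized by $\boldsymbol{y}$, and then only decreases coordinates to reach $\xb$; verifying that such a $\ub$ exists is the one step requiring real work, and I would either cite it directly or construct $\ub$ explicitly, using the elementary order-preserving property that componentwise domination $\ub\leq\xb$ forces $u_{i:n}\leq x_{i:n}$ for every $i$.

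Granting this decomposition, the sufficiency direction ($\Leftarrow$) is immediate. Suppose $\phi$ is decreasing and Schur-convex and let $\xb \mathop \preceq \limits^{{\mathop{\rm w}}}\boldsymbol{y}$. Choosing $\ub$ as above, decreasingness applied to $\ub\leq\xb$ gives $\phi(\xb)\leq\phi(\ub)$, while Schur-convexity applied to $\ub\mathop \preceq \limits^{{\mathop{\rm m}}}\boldsymbol{y}$ gives $\phi(\ub)\leq\phi(\boldsymbol{y})$; chaining the two inequalities yields $\phi(\xb)\leq\phi(\boldsymbol{y})$, as required.

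For the necessity direction ($\Rightarrow$), I would assume $\phi(\xb)\leq\phi(\boldsymbol{y})$ whenever $\xb \mathop \preceq \limits^{{\mathop{\rm w}}}\boldsymbol{y}$ and recover the two properties by unwinding definitions. To get Schur-convexity, note that majorization is stronger than weak supermajorization: if $\xb \mathop \preceq \limits^{{\mathop{\rm m}}}\boldsymbol{y}$, then the equal-sum condition promotes the partial-sum inequalities for $j=1,\ldots,n-1$ to the full range $j=1,\ldots,n$, so $\xb \mathop \preceq \limits^{{\mathop{\rm w}}}\boldsymbol{y}$ and hence $\phi(\xb)\leq\phi(\boldsymbol{y})$, which is exactly Schur-convexity. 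To get decreasingness, take any two points with $\xb\leq\boldsymbol{y}$ componentwise; then $x_{i:n}\leq y_{i:n}$ for each $i$, so $\sum_{i=1}^{j}y_{i:n}\geq\sum_{i=1}^{j}x_{i:n}$ for all $j$, that is $\boldsymbol{y}\mathop \preceq \limits^{{\mathop{\rm w}}}\xb$, and the hypothesis gives $\phi(\boldsymbol{y})\leq\phi(\xb)$. Specializing to pairs that differ in a single coordinate shows $\phi$ is non-increasing in each argument.

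The main obstacle is the decomposition underlying the sufficiency half: the necessity half merely unwinds the definitions and uses the order-preserving map $\xb\mapsto(x_{1:n},\ldots,x_{n:n})$, whereas the existence of the intermediate vector $\ub$ is where the genuine combinatorial content of weak majorization resides. I would therefore isolate it as a preliminary lemma (or cite Marshall et al.\cite{met}) and then assemble the two one-line inequalities above.
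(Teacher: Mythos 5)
The paper does not prove this lemma at all: it is imported verbatim from Marshall et al.\cite{met} (Theorem 3.A.8), so there is no internal proof to compare yours against. Your argument is the standard one and is correct in outline: the necessity half is, as you say, pure definition-unwinding (ordinary majorization implies weak supermajorization, giving Schur-convexity; componentwise domination reverses the weak supermajorization, giving decreasingness), while the sufficiency half rests entirely on the decomposition $\boldsymbol{x}\mathop \preceq \limits^{{\mathop{\rm w}} }\boldsymbol{y}$ if and only if there exists $\boldsymbol{u}$ with $\boldsymbol{u}\leq\boldsymbol{x}$ and $\boldsymbol{u}\mathop \preceq \limits^{{\mathop{\rm m}} }\boldsymbol{y}$ (Marshall et al.\ 5.A.9.a), which you rightly isolate as the only step carrying real content. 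Two minor remarks. First, your informal gloss ``then only decreases coordinates to reach $\boldsymbol{x}$'' runs the wrong way: since $\boldsymbol{u}\leq\boldsymbol{x}$, one \emph{increases} coordinates of $\boldsymbol{u}$ to reach $\boldsymbol{x}$; the inequality you actually deploy, $\phi(\boldsymbol{x})\leq\phi(\boldsymbol{u})$ from decreasingness, is nevertheless the right one. Second, for an arbitrary $\mathscr{A}\subseteq\mathbb{R}^n$ the intermediate vector $\boldsymbol{u}$ need not lie in $\mathscr{A}$, so $\phi(\boldsymbol{u})$ may be undefined; the sufficiency direction as you argue it requires $\mathscr{A}=\mathbb{R}^n$ or a domain closed under the decomposition (Marshall et al.\ indeed state Theorem 3.A.8 on $\mathbb{R}^n$). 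This caveat is harmless for the use made of the lemma in this paper, where $\boldsymbol{u}$ is majorized by a vector of positive parameters and therefore stays in the positive orthant, but it is worth flagging if the lemma is quoted for a general set $\mathscr{A}$.
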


\begin{lemma}[Marshall et al.\cite{met}, 3.B.2]\label{mkl2}
Let $\phi:\mathbb{R}^n\rightarrow\mathbb{R}$ be a decreasing and Schur-convex function and $g:\mathbb{R}\rightarrow\mathbb{R}$ be an increasing and concave function. Then, the function $\psi(\xb)=\phi(g(x_1),\ldots,g(x_n))$ is decreasing and Schur-convex.
\end{lemma}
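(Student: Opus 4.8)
The statement asserts two things about $\psi$, namely that it is decreasing and that it is Schur-convex, and I would handle them separately. The monotonicity is the easy half. If $\xb\le\boldsymbol{x}'$ holds coordinatewise, then $g(x_i)\le g(x_i')$ for each $i$ because $g$ is increasing, so $(g(x_1),\ldots,g(x_n))\le(g(x_1'),\ldots,g(x_n'))$ coordinatewise; since $\phi$ is decreasing, this gives $\psi(\xb)=\phi(g(x_1),\ldots,g(x_n))\ge\phi(g(x_1'),\ldots,g(x_n'))=\psi(\boldsymbol{x}')$. Thus $\psi$ is decreasing, and note that concavity of $g$ is not used at this stage.

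The real content is Schur-convexity, and the plan is to reduce it to Lemma \ref{mkl}. Suppose $\xb\mathop \preceq \limits^{{\mathop{\rm m}} }\boldsymbol{y}$. I claim that then the image vectors satisfy
\[
(g(x_1),\ldots,g(x_n))\ \mathop \preceq \limits^{{\mathop{\rm w}} }\ (g(y_1),\ldots,g(y_n)),
\]
that is, the first is weakly supermajorized by the second. Granting this, Lemma \ref{mkl} finishes the argument at once: since $\phi$ is decreasing and Schur-convex, it satisfies $\phi(\ub)\le\phi(\boldsymbol{v})$ whenever $\ub\mathop \preceq \limits^{{\mathop{\rm w}} }\boldsymbol{v}$, and applying this to $\ub=(g(x_1),\ldots,g(x_n))$ and $\boldsymbol{v}=(g(y_1),\ldots,g(y_n))$ yields $\psi(\xb)\le\psi(\boldsymbol{y})$, which is exactly Schur-convexity of $\psi$.

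It remains to prove the claim, and this is the heart of the matter. I would establish it through the functional description of weak supermajorization: a vector $\ub$ is weakly supermajorized by $\boldsymbol{v}$ if and only if $\sum_i\varphi(u_i)\le\sum_i\varphi(v_i)$ for every continuous decreasing convex $\varphi$. Taking $u_i=g(x_i)$ and $v_i=g(y_i)$, it suffices to verify $\sum_i(\varphi\circ g)(x_i)\le\sum_i(\varphi\circ g)(y_i)$ for each such $\varphi$. Now $\varphi\circ g$ is convex, being the composition of the decreasing convex map $\varphi$ with the concave map $g$ (for $t\in[0,1]$, concavity of $g$ and monotonicity then convexity of $\varphi$ give $\varphi(g(ta+(1-t)b))\le t\,\varphi(g(a))+(1-t)\varphi(g(b))$). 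Since the sum of a convex function over the coordinates is a Schur-convex function, the assumption $\xb\mathop \preceq \limits^{{\mathop{\rm m}} }\boldsymbol{y}$ delivers exactly $\sum_i(\varphi\circ g)(x_i)\le\sum_i(\varphi\circ g)(y_i)$, proving the claim.

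The main obstacle is precisely this claim: concavity of $g$ directly controls only the full sum, i.e.\ the case $j=n$, and the genuine work is upgrading to all bottom-$j$ partial sums. A hands-on alternative to the functional route is to write $\xb\mathop \preceq \limits^{{\mathop{\rm m}} }\boldsymbol{y}$ as a finite chain of Robin Hood transfers and to check, by a case analysis on where the two altered coordinates sit in the sorted order, that each transfer leaves every partial sum of the smallest $g$-values non-decreasing, using the pointwise inequality $g(c)+g(d)\le g(c+\delta)+g(d-\delta)$ valid for $c+\delta\le d-\delta$ by concavity. That sorting bookkeeping, rather than any single estimate, is where the difficulty lies; I would expect the functional-characterization argument above to be the cleaner path, with the increasing hypothesis on $g$ needed mainly to secure the decreasingness of $\psi$.
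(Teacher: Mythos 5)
Your argument is correct. Note that the paper offers no proof of this statement at all: it is quoted verbatim as Lemma 3.B.2 of Marshall, Olkin and Arnold and used as a black box, so there is nothing to compare against line by line. What you have written is a sound reconstruction of the standard argument: the monotonicity half is immediate, and the Schur-convexity half correctly reduces to the fact that a concave $g$ maps $\xb\mathop \preceq \limits^{{\mathop{\rm m}}}\boldsymbol{y}$ to $(g(x_1),\ldots,g(x_n))\mathop \preceq \limits^{{\mathop{\rm w}}}(g(y_1),\ldots,g(y_n))$, which you justify via the functional characterization of weak supermajorization by decreasing convex test functions and the convexity of $\varphi\circ g$; Lemma \ref{mkl} then closes the loop. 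Your side remark is also accurate: the increasingness of $g$ is needed only for the decreasingness of $\psi$, not for its Schur-convexity. The one thing worth flagging is that your proof leans on two further facts from the same reference (the test-function characterization of $\mathop \preceq \limits^{{\mathop{\rm w}}}$ and the Schur-convexity of $\sum_i h(x_i)$ for convex $h$), so it is a reduction to other cited results rather than a fully self-contained argument --- which is perfectly appropriate for a lemma the paper itself only cites.
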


One of the needed concepts in this paper is the Archimedean copula. The class of Archimedean copulas having a wide range of dependence structures including the independent copula. In the following, we state some useful definitions and lemmas related to copulas.

\begin{definition}
A copula $C$ is called Archimedean if it is of the form $C(u_1,\ldots,u_n)=\phi^{-1}\left(\sum \limits_{i=1}^n \phi(u_i)\right)$, for $(u_1,\ldots,u_n)\in [0,1]^n$, which $\phi:[0,1]\rightarrow[0,\infty]$ is a strictly decreasing function, $\phi(0)=\infty$, $\phi(1)=0$ and $(-1)^k \frac{{\rm d}^k \phi (x)}{{\rm d}x^k}\geq 0 $, for $k\geq 0$, where $\phi^{-1}$ is the inverse of the function $\phi$. The function $\phi$ is called generator of the copula $C$.
\end{definition}

We state the following lemma from Durante\cite{dur} and Dolati and Dehghan Nezhad\cite{dd} related to Schur-concavity of Archimedean copulas. 

\begin{lemma}\label{l7}
Every Archimedean copula is Schur-concave.
\end{lemma}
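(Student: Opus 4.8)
The plan is to exploit the separable structure $C(\ub)=\phi^{-1}\bigl(\sum_{i=1}^n\phi(u_i)\bigr)$ and reduce Schur-concavity of $C$ to the Schur-convexity of the inner sum $\Phi(\ub)=\sum_{i=1}^n\phi(u_i)$, followed by composition with the decreasing map $\phi^{-1}$. First I would record the two facts the generator supplies: from the defining inequalities $(-1)^k\phi^{(k)}\ge 0$ one reads off $\phi'\le 0$, so $\phi$ and hence $\phi^{-1}$ are (strictly) decreasing, and $\phi''\ge 0$, so $\phi$ is convex on $(0,1)$. Note also that complete monotonicity of $\phi$ makes it smooth on the interior, which is what will license a differential criterion.

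Next I would show $\Phi$ is Schur-convex on $(0,1)^n$. Since $\Phi$ is symmetric and smooth on the interior, the Schur--Ostrowski criterion (Marshall et al.\ \cite{met}) applies, and it suffices to check that $(u_i-u_j)\bigl(\partial_{u_i}\Phi-\partial_{u_j}\Phi\bigr)\ge 0$ for all $i\ne j$. Here $\partial_{u_i}\Phi=\phi'(u_i)$, so this product equals $(u_i-u_j)\bigl(\phi'(u_i)-\phi'(u_j)\bigr)$, which is non-negative because $\phi'$ is non-decreasing by the convexity of $\phi$. Equivalently, this is the standard separable-function fact that $\sum_i\phi(u_i)$ is Schur-convex whenever $\phi$ is convex.

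Finally I would combine the two ingredients. If $\xb\preceq^{\mathrm{m}}\boldsymbol{y}$, then Schur-convexity of $\Phi$ gives $\Phi(\xb)\le\Phi(\boldsymbol{y})$; applying the decreasing function $\phi^{-1}$ reverses the inequality to $C(\xb)=\phi^{-1}\bigl(\Phi(\xb)\bigr)\ge\phi^{-1}\bigl(\Phi(\boldsymbol{y})\bigr)=C(\boldsymbol{y})$, which is precisely Schur-concavity of $C$. A useful observation is that this composition step needs only that $\phi^{-1}$ is decreasing, not that it is differentiable, so the argument stays robust even where $\phi'$ might vanish.

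The only genuine obstacle is the boundary behavior. Because $\phi(0)=\infty$, the sum $\Phi$ takes the value $+\infty$ as soon as one coordinate equals $0$, and $\phi^{-1}$ must be read on the extended half-line with $\phi^{-1}(\infty)=0$. I would therefore first establish the inequality on the open cube $(0,1)^n$, where $\phi$ is finite and smooth and the Schur--Ostrowski computation is legitimate, and then extend the conclusion to all of $[0,1]^n$ by continuity of $C$, recalling that every copula is continuous (indeed Lipschitz). This handles the extended-value arithmetic cleanly and completes the proof that every Archimedean copula is Schur-concave.
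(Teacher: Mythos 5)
Your argument is correct, and it is essentially the standard proof: under the paper's definition of the generator, $\phi''\ge 0$ makes $\phi$ convex, hence $\sum_i\phi(u_i)$ is Schur-convex, and composing with the decreasing map $\phi^{-1}$ reverses the majorization inequality to give Schur-concavity, with the boundary case $\phi(0)=\infty$ handled by continuity of the copula. Note that the paper itself offers no proof of this lemma --- it simply imports the result from Durante (2006) and Dolati and Dehghan Nezhad (2014) --- and the argument in those sources is the same composition argument you give, so your write-up fills the gap faithfully.
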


\begin{definition}\label{def2}
A survival copula $C_R$ is positively upper orthant dependent ({\rm PUOD}), if for all $\ub \in [0,1]^n$, $C_R(\ub)\geq \prod_{i=1}^n u_i$.
\end{definition}

\begin{definition}\label{def1}
Let $C_R$ and $D_R$ be two survival copulas. $C_R$ is less {\rm PUOD} than $D_R$, denoted by $C_R\prec D_R$, if for all $\ub \in [0,1]^n$, $C_R(\ub)\leq D_R(\ub)$.
\end{definition}
\begin{definition}
Let $C$ be a copula. The main diagonal section of $C$ is the function $\delta_C:[0,1]\rightarrow [0,1]$ defined by $\delta_C(u)=C(u,\ldots,u)$.
\end{definition}

\begin{lemma}\label{l6}
For any copula $C$ and for all $\ub \in [0,1]^n$,
\begin{equation*}
\max\left(\sum_{i=1}^n u_i-n+1,0\right)\leq C(\ub)\leq \min \left(u_1,\ldots,u_n\right),
\end{equation*}
which, the bounds are called the Fr\'echet-Hoeffding bounds.
\end{lemma}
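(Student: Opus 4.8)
The plan is to exploit the probabilistic meaning of a copula: a copula $C$ is exactly the joint distribution function of a random vector $(U_1,\ldots,U_n)$ each of whose marginals $U_i$ is uniformly distributed on $[0,1]$, so that $C(\ub)=\Pr(U_1\le u_1,\ldots,U_n\le u_n)$. Both inequalities then follow from elementary estimates on this representation, and each can alternatively be read off directly from the defining axioms of a copula.

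For the upper bound I would note that the event $\bigcap_{i=1}^n\{U_i\le u_i\}$ is contained in each single event $\{U_j\le u_j\}$, whence $C(\ub)\le \Pr(U_j\le u_j)=u_j$ for every $j$; taking the minimum over $j$ yields $C(\ub)\le\min(u_1,\ldots,u_n)$. Equivalently, this is immediate from the axioms: $C$ is nondecreasing in each coordinate and satisfies the marginal normalization $C(1,\ldots,1,u_j,1,\ldots,1)=u_j$, so replacing every coordinate other than the $j$th by $1$ can only increase the value, giving the same conclusion without invoking the probabilistic picture.

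For the lower bound I would pass to complements and apply Boole's (union) inequality. Writing $C(\ub)=1-\Pr\bigl(\bigcup_{i=1}^n\{U_i>u_i\}\bigr)$ and bounding the probability of the union by the sum $\sum_{i=1}^n\Pr(U_i>u_i)=\sum_{i=1}^n(1-u_i)=n-\sum_{i=1}^n u_i$ gives $C(\ub)\ge\sum_{i=1}^n u_i-n+1$. Since $C(\ub)$ is itself a probability it is also $\ge 0$, and combining the two lower estimates produces $C(\ub)\ge\max\bigl(\sum_{i=1}^n u_i-n+1,\,0\bigr)$.

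The only delicate point is the lower bound. If one prefers a purely analytic argument that does not appeal to the probabilistic representation, Boole's inequality must be replaced by a direct use of the $n$-increasing (rectangle-inequality) property of $C$ together with groundedness, carried out by induction on $n$; this inductive bookkeeping is the main obstacle, and it is precisely the step that the probabilistic reformulation lets us bypass cleanly.
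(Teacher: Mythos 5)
Your proof is correct. The paper does not actually prove Lemma \ref{l6}: it states the Fr\'echet--Hoeffding bounds as a known classical fact, implicitly deferring to the copula literature (Nelsen). Your argument is the standard one and is sound: representing $C$ as the joint distribution function of a vector $(U_1,\ldots,U_n)$ with uniform $[0,1]$ marginals, the upper bound follows from monotonicity of probability under event inclusion (or, as you note, directly from the marginal normalization and coordinatewise monotonicity axioms), and the lower bound follows from Boole's inequality applied to the complementary events together with nonnegativity of $C$. The one point worth making explicit is that the probabilistic representation is legitimate for \emph{every} copula: groundedness and the $n$-increasing property guarantee that $C$ extends to a genuine probability measure on $[0,1]^n$ with uniform marginals, so nothing is lost by arguing probabilistically rather than through the inductive rectangle-inequality bookkeeping you mention. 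Since the paper supplies no proof, your write-up is strictly more informative than the source on this point.
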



For a comprehensive discussion in the topic of copula and the different types of dependency, one may refer to Nelsen\cite{nel}.

\section{Main results}\label{sec3}
This section consists of two subsections. In Subsection \ref{sub1}, we compare the smallest
claim amounts from two interdependent heterogeneous portfolios of risks in the sense
of the usual and the likelihood ratio orders. In subsection \ref{sub2}, some bounds for the survival function of the smallest claim amount are presented and some examples are established to illustrate the validity of the results.

\subsection{Stochastic comparison of the smallest claim amounts}\label{sub1}

The following theorem provides the usual stochastic order between the smallest
claim amounts in two heterogeneous portfolios of risks with the common parameter vectors $\lab$ and $\pb$ and different associated copulas.
\begin{theorem}\label{t1}
Let $ X_{\la_1},\ldots,X_{\la_n} $ be non-negative random variables with
$ X_{\la_i} \thicksim \Fb(x;\la_i)$, $i = 1, \ldots,n $, and the associated copula $C_R$. Further, suppose that $I_{p_1},\ldots,
I_{p_n}$ is a set of independent Bernoulli random variables, independent of the $X_{\la_i}$'s, with ${\rm E}[I_{p_i}]=p_i$, $i=1,\ldots,n$. Let $Y_i=I_{p_i} X_{\la_i}$, $i=1,\ldots,n$. Then we have
\begin{eqnarray*}
C^*_R\prec C_R \Longrightarrow Y^*_{1:n} \leq_{{\rm st}}Y_{1:n},
\end{eqnarray*}
where $Y_{1:n}$ and $Y^*_{1:n}$ are the smallest order statistics of $(Y_1,\ldots,Y_n)$ under the copula structures $C_R$ and $C^*_R$, respectively.
\end{theorem}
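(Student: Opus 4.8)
The plan is to reduce the usual-stochastic-order assertion to a single pointwise inequality between survival functions, and then to recognize each survival function as a copula value scaled by a common nonnegative constant. Recall that $Y^*_{1:n}\leq_{{\rm st}}Y_{1:n}$ means exactly $P(Y^*_{1:n}>x)\leq P(Y_{1:n}>x)$ for all $x\in\mathbb{R}$, so the whole proof amounts to establishing this inequality.

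First I would derive an explicit formula for the survival function of $Y_{1:n}$. For $x\geq 0$ the event $\{Y_{1:n}>x\}$ equals $\bigcap_{i=1}^n\{Y_i>x\}$, and since $Y_i=I_{p_i}X_{\la_i}$ with $X_{\la_i}\geq 0$, on this range we have $\{Y_i>x\}=\{I_{p_i}=1\}\cap\{X_{\la_i}>x\}$ (a zero value of $I_{p_i}$ forces $Y_i=0\leq x$). Because the $I_{p_i}$ are mutually independent and independent of the vector $(X_{\la_1},\ldots,X_{\la_n})$, the Bernoulli factors detach, giving
\begin{equation*}
P(Y_{1:n}>x)=\Big(\prod_{i=1}^n p_i\Big)\,P(X_{\la_1}>x,\ldots,X_{\la_n}>x)=\Big(\prod_{i=1}^n p_i\Big)\,C_R\big(\Fb(x;\la_1),\ldots,\Fb(x;\la_n)\big),
\end{equation*}
where the last equality uses the survival copula representation $\bar{H}(x_1,\ldots,x_n)=C_R(\Fb(x_1;\la_1),\ldots,\Fb(x_n;\la_n))$. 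The identical computation with $C^*_R$ in place of $C_R$ yields the corresponding formula for $Y^*_{1:n}$; crucially, the factor $\prod_{i=1}^n p_i$ and the marginals $\Fb(x;\la_i)$ are the same in both portfolios, since the parameter vectors $\lab$ and $\pb$ are shared and only the copula differs.

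Next I would invoke the hypothesis $C^*_R\prec C_R$, which by Definition \ref{def1} means $C^*_R(\ub)\leq C_R(\ub)$ for every $\ub\in[0,1]^n$. Evaluating this at the admissible point $\ub=(\Fb(x;\la_1),\ldots,\Fb(x;\la_n))\in[0,1]^n$ and multiplying through by the nonnegative constant $\prod_{i=1}^n p_i$ gives $P(Y^*_{1:n}>x)\leq P(Y_{1:n}>x)$ for all $x\geq 0$. For $x<0$ both survival functions equal $1$, since the $Y_i$ are nonnegative, so the inequality holds on all of $\mathbb{R}$, which is precisely $Y^*_{1:n}\leq_{{\rm st}}Y_{1:n}$.

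I expect no serious obstacle here: once the survival function is in hand, the conclusion is immediate from evaluating the copula comparison at one point. The only step demanding genuine care is the first one, namely justifying the factorization of the Bernoulli indicators and correctly collapsing $\{Y_i>x\}$ to $\{I_{p_i}=1,\,X_{\la_i}>x\}$ on $x\geq 0$ while separately treating $x<0$. In other words, the crux is clean bookkeeping of the mixed discrete-continuous structure of $Y_i$ rather than any substantive inequality.
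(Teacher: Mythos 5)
Your proposal is correct and follows essentially the same route as the paper: derive $\Gb_{Y_{1:n}}(x)=\left(\prod_{i=1}^n p_i\right)C_R\left(\Fb(x;\la_1),\ldots,\Fb(x;\la_n)\right)$ by factoring out the independent Bernoulli indicators, then apply the pointwise inequality $C^*_R(\ub)\leq C_R(\ub)$ from Definition \ref{def1}. Your explicit treatment of the case $x<0$ and of the collapse of $\{Y_i>x\}$ to $\{I_{p_i}=1,\,X_{\la_i}>x\}$ is slightly more careful than the paper's, but the argument is the same.
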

\begin{proof}
First, for any $x\geq 0$, the survival function of $Y_{1:n}$ is obtained as follows:
\begin{eqnarray}\label{dist1}
\Gb_{Y_{1:n}}(x)&=&{\rm P}(Y_{1:n}>x)\nonumber\\
&=&{\rm P}(I_{p_i} X_{\la_i}>x,~ \forall ~1\leq i \leq n)\nonumber\\
&=&{\rm P}(I_{p_i}=1,~\forall ~1\leq i \leq n){\rm P}\left(I_{p_i} X_{\la_i}>x,~ \forall ~1\leq i \leq n~|~I_{p_i}=1,~\forall ~1\leq i \leq n\right)\nonumber\\
&=&\left(\prod\limits_{i=1}^n p_i\right)~ {\rm P}\left(X_{\la_i}>x,~ \forall ~1\leq i \leq n\right)\nonumber\\
&=&\left(\prod\limits_{i=1}^n p_i\right)~ C_R\left(\Fb(x;\la_1),\ldots,\Fb(x;\la_n)\right).
\end{eqnarray}
Similarly, the survival function of $Y^*_{1:n}$ is given by
\begin{equation*}
\Gb_{Y^*_{1:n}}(x)=\left(\prod\limits_{i=1}^n p_i\right)~ C^*\left(\Fb(x,\la_1),\ldots,\Fb(x,\la_n)\right).
\end{equation*}
Thus, by Definition \ref{def1} and comparing the survival functions of  $Y_{1:n}$ and $Y^*_{1:n}$ the proof is completed. 
\end{proof}
The following theorem provides the usual stochastic order between the smallest
claim amounts in two heterogeneous portfolios of risks with the common associated copulas.
\begin{theorem}\label{t2}
Let $ X_{\la_1},\ldots,X_{\la_n} $ ($ X_{\la^*_1},\ldots,X_{\la^*_n} $) be non-negative random variables with
$ X_{\la_i} \thicksim \Fb(x;\la_i)$ ($ X_{\la^*_i} \thicksim \Fb(x;\la^*_i)$), $i = 1, \ldots,n $, and the associated copula $C_R$. Further, suppose that $I_{p_1},\ldots,I_{p_n}$ ($I_{p^*_1},\ldots,I_{p^*_n}$) is a set of independent Bernoulli random variables, independent of the $X_{\la_i}$'s ($X_{\la^*_i}$'s), with ${\rm E}[I_{p_i}]=p_i$ (${\rm E}[I_{p^*_i}]=p^*_i$), $i=1,\ldots,n$.  Assume that the following conditions hold:
\begin{itemize}
\item[{\rm (i)}] $\Fb(x;\la)$ is increasing and concave in $\la$ for any $x\in \mathbb{R}_+$;
\item[{\rm (ii)}] $C_R$ is Schur-concave.
\end{itemize}
Then, we have
\begin{eqnarray*}
\prod\limits_{i=1}^n p^*_i \leq \prod\limits_{i=1}^n p_i,~ (\la_1,\ldots,\la_n) \mathop \preceq \limits^{{\mathop{\rm w}} }  (\la^*_1,\ldots,\la^*_n)  \Longrightarrow Y^*_{1:n} \leq_{{\rm st}}Y_{1:n}.
\end{eqnarray*}
\end{theorem}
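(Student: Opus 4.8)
The plan is to reduce the claimed usual-stochastic-order relation to a single pointwise inequality between the explicit survival functions, and then to prove that inequality by combining the monotonicity and Schur-concavity behaviour of the underlying copula with Lemmas \ref{mkl} and \ref{mkl2}. First I would reuse the computation from the proof of Theorem \ref{t1}: for every $x\ge 0$,
\[
\Gb_{Y_{1:n}}(x)=\left(\prod_{i=1}^n p_i\right)C_R\bigl(\Fb(x;\la_1),\ldots,\Fb(x;\la_n)\bigr),
\]
and likewise $\Gb_{Y^*_{1:n}}(x)=\left(\prod_{i=1}^n p^*_i\right)C_R\bigl(\Fb(x;\la^*_1),\ldots,\Fb(x;\la^*_n)\bigr)$; for $x<0$ both survival functions equal $1$ since the $Y_i$ are non-negative, so only $x\ge 0$ matters. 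As all factors are non-negative and $\prod_{i=1}^n p^*_i\le \prod_{i=1}^n p_i$ by hypothesis, it suffices to prove the copula inequality
\[
C_R\bigl(\Fb(x;\la^*_1),\ldots,\Fb(x;\la^*_n)\bigr)\le C_R\bigl(\Fb(x;\la_1),\ldots,\Fb(x;\la_n)\bigr),
\]
after which the two inequalities combine through $a^*b^*\le ab$ for $0\le a^*\le a$ and $0\le b^*\le b$ to give $\Gb_{Y^*_{1:n}}(x)\le\Gb_{Y_{1:n}}(x)$.

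The heart of the matter is the fixed-$x$ function $\psi(\lab)=C_R\bigl(\Fb(x;\la_1),\ldots,\Fb(x;\la_n)\bigr)$, which I would show to be \emph{increasing and Schur-concave}. Increasingness is immediate, since $C_R$ (being a copula) is non-decreasing in each argument and each map $\la\mapsto\Fb(x;\la)$ is increasing by condition (i). For Schur-concavity I would apply Lemma \ref{mkl2} in its negated form: set $g(\la)=\Fb(x;\la)$, which is increasing and concave by (i), and note that $-C_R$ is decreasing (copula monotonicity) and Schur-convex (because $C_R$ is Schur-concave by (ii)); Lemma \ref{mkl2} then makes $-\psi(\lab)=(-C_R)\bigl(g(\la_1),\ldots,g(\la_n)\bigr)$ decreasing and Schur-convex, so that $\psi$ is increasing and Schur-concave.

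To finish, I would feed the hypothesis $\lab\mathop \preceq \limits^{{\mathop{\rm w}} }\lab^*$ into Lemma \ref{mkl}, once more through $-\psi$: since $-\psi$ is decreasing and Schur-convex, Lemma \ref{mkl} yields $-\psi(\lab)\le -\psi(\lab^*)$, that is $\psi(\lab^*)\le\psi(\lab)$, which is exactly the required copula inequality. The step I expect to be the main obstacle is bookkeeping the directions consistently: the hypothesis is weak supermajorization, $\psi$ comes out increasing and Schur-concave rather than in the decreasing and Schur-convex form in which Lemmas \ref{mkl} and \ref{mkl2} are stated, and the target inequality swaps the starred and unstarred vectors, so each lemma must be invoked via the auxiliary function $-\psi$ to get the orientation right. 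A minor secondary point is that $C_R$ is defined on $[0,1]^n$ while the two lemmas are phrased on $\mathbb{R}^n$; this is harmless because $g$ takes values in $[0,1]$, but it should be noted when the composition is formed.
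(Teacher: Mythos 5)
Your proposal is correct and follows essentially the same route as the paper: both reduce to the pointwise survival-function formula from Theorem \ref{t1}, apply Lemma \ref{mkl2} to $-C_R\bigl(\Fb(x;\la_1),\ldots,\Fb(x;\la_n)\bigr)$ (decreasing and Schur-convex composed with the increasing concave map $\la\mapsto\Fb(x;\la)$), and then invoke Lemma \ref{mkl} with the weak supermajorization hypothesis before multiplying by the ordered products of the $p_i$'s. Your extra remarks on the $x<0$ case and the domain of $C_R$ are harmless refinements the paper leaves implicit.
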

\begin{proof}
Define $\Psi(\lab)=- C_R\left(\Fb(x;\la_1),\ldots,\Fb(x;\la_n)\right)$. Based on the condition (ii) and the nature of copula, $-C_R$ is decreasing and Schur-convex. So, the condition (i) and Lemma \ref{mkl2} imply that $\Psi$ is decreasing and Schur-convex in $\lab$. Thus, using the Lemma \ref{mkl}, $\lab \mathop \preceq \limits^{{\mathop{\rm w}} } \lab^*$ implies $\Psi(\lab)\leq \Psi(\lab^*)$. Hence, the condition $\prod_{i=1}^n p^*_i \leq \prod_{i=1}^n p_i$ and the relation \eqref{dist1} complete the proof.
\end{proof}
The following theorem provides the ordering of the smallest
claim amounts from two heterogeneous portfolios of risks with the different parameter vectors and different associated copulas.
\begin{theorem}\label{t3}
Let $ X_{\la_1},\ldots,X_{\la_n} $ ($ X_{\la^*_1},\ldots,X_{\la^*_n} $) be non-negative random variables with
$ X_{\la_i} \thicksim \Fb(x;\la_i)$ ($ X_{\la^*_i} \thicksim \Fb(x;\la^*_i)$), $i = 1, \ldots,n $, and associated copula $C_R$ ($C^*_R$). Further, suppose that $I_{p_1},\ldots,
I_{p_n}$  ($I_{p^*_1},\ldots,I_{p^*_n} $) is a set of independent Bernoulli random variables, independent of the $X_{\la_i}$'s, with ${\rm E}[I_{p_i}]=p_i$ (${\rm E}[I_{p^*_i}]=p^*_i$), $i=1,\ldots,n$. Assume that the following conditions hold:
\begin{itemize}
\item[{\rm (i)}] $\Fb(x;\la)$ is increasing and concave in $\la$ for any $x\in \mathbb{R}_+$;
\item[{\rm (ii)}] $C^*_R$ is Schur-concave.
\end{itemize}
Then, we have
\begin{eqnarray*}
\prod\limits_{i=1}^n p^*_i \leq \prod\limits_{i=1}^n p_i,~ (\la_1,\ldots,\la_n) \mathop \preceq \limits^{{\mathop{\rm w}} }  (\la^*_1,\ldots,\la^*_n) ,~C^*_R\prec C_R \Longrightarrow Y^*_{1:n} \leq_{{\rm st}}Y_{1:n}.
\end{eqnarray*}
\end{theorem}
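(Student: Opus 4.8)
The plan is to compare the two survival functions directly, using the closed form established in \eqref{dist1}. Writing
\[
\Gb_{Y_{1:n}}(x)=\left(\prod_{i=1}^n p_i\right) C_R\left(\Fb(x;\la_1),\ldots,\Fb(x;\la_n)\right),
\]
\[
\Gb_{Y^*_{1:n}}(x)=\left(\prod_{i=1}^n p^*_i\right) C^*_R\left(\Fb(x;\la^*_1),\ldots,\Fb(x;\la^*_n)\right),
\]
the goal reduces to showing $\Gb_{Y^*_{1:n}}(x)\leq \Gb_{Y_{1:n}}(x)$ for every $x\geq 0$. I would prove this by a chain of three inequalities, each absorbing one of the three hypotheses, and I would rely throughout on the non-negativity of copulas and of the products $\prod p_i$, so that multiplying through by these factors preserves every inequality.

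First I would dispose of the majorization hypothesis exactly as in the proof of Theorem \ref{t2}, but applied to the copula $C^*_R$. Since condition (ii) gives that $C^*_R$ is Schur-concave, $-C^*_R$ is decreasing and Schur-convex, and condition (i) together with Lemma \ref{mkl2} shows that $\lab\mapsto -C^*_R(\Fb(x;\la_1),\ldots,\Fb(x;\la_n))$ is decreasing and Schur-convex. Lemma \ref{mkl} then turns $\lab\mathop \preceq \limits^{{\mathop{\rm w}} }\lab^*$ into
\[
C^*_R\left(\Fb(x;\la^*_1),\ldots,\Fb(x;\la^*_n)\right)\leq C^*_R\left(\Fb(x;\la_1),\ldots,\Fb(x;\la_n)\right).
\]
Next I would switch copulas at the common argument: by Definition \ref{def1}, $C^*_R\prec C_R$ means $C^*_R(\ub)\leq C_R(\ub)$ for every $\ub$, so in particular $C^*_R(\Fb(x;\la_1),\ldots,\Fb(x;\la_n))\leq C_R(\Fb(x;\la_1),\ldots,\Fb(x;\la_n))$. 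Finally I would enlarge the leading factor using $\prod_{i=1}^n p^*_i\leq \prod_{i=1}^n p_i$ together with $C_R\geq 0$. Multiplying the first inequality by $\prod p^*_i$, the second by $\prod p^*_i$, and then replacing $\prod p^*_i$ by $\prod p_i$, the three steps concatenate to $\Gb_{Y^*_{1:n}}(x)\leq \Gb_{Y_{1:n}}(x)$, which is the asserted usual stochastic order.

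The one point that needs care — and the only place where the argument is not purely mechanical — is the order in which the copulas are exchanged. Because the hypothesis supplies Schur-concavity of $C^*_R$ but not of $C_R$, the majorization step must be carried out \emph{before} the copula comparison, that is, while the copula is still $C^*_R$; one cannot first replace $C^*_R$ by $C_R$ and then invoke the majorization lemma, since Lemma \ref{mkl2} would then demand that $C_R$ be Schur-concave, which is not assumed. Holding $C^*_R$ fixed through the majorization step and only afterwards applying $C^*_R\prec C_R$ is precisely what makes the chain valid, and it explains why this theorem can merge the conclusions of Theorems \ref{t1} and \ref{t2} while assuming Schur-concavity of only the starred copula.
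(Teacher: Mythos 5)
Your proof is correct and follows essentially the same route as the paper: the paper chains the two usual-stochastic-order comparisons through the intermediate variable $V^{C^*_R}_{\lab,\pb}$ by invoking Theorems \ref{t2} and \ref{t1}, which is exactly your chain of inequalities on the survival functions (majorization handled while the copula is still the Schur-concave $C^*_R$, then the copula swap via $C^*_R\prec C_R$, with the factor $\prod p_i$ absorbed along the way). Your closing remark about why the majorization step must precede the copula exchange is a correct and worthwhile observation that the paper leaves implicit.
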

\begin{proof}
Suppose that $V^{C_R}_{\lab,\pb}$ denotes the smallest of the variables $Y_i=I_{p_i} X_{\la_i}$, $i=1,\ldots,n$, where $(X_{\la_1},\ldots,X_{\la_n})$ has the survival copula $C_R$. It is easily seen that $Y^*_{1:n}\mathop = \limits^{{\mathop{\rm st}} }V^{C^*_R}_{\lab^*,\pb^*}$ and $Y_{1:n}\mathop = \limits^{{\mathop{\rm st}} }V^{C_R}_{\lab,\pb}$. On the other hand, Theorem \ref{t1} and Theorem \ref{t2} imply that $V^{C^*_R}_{\lab,\pb} \leq_{{\rm st}}V^{C_R}_{\lab,\pb}$ and $V^{C^*_R}_{\lab^*,\pb^*} \leq_{{\rm st}}V^{C^*_R}_{\lab,\pb}$, respectively. Hence, the required result is obtained.
\end{proof}

Generally, Theorem \ref{t3} considers the comparison of the smallest claim amounts arising from two portfolios, in the sense of the usual stochastic order. But its result can be obtained in the sense of the stronger orders under some particular cases. The proportional hazard rate (PHR) model is an important model in reliability theory, actuarial science and other fields; see for example Cox\cite{cox}, Finkelstein\cite{fin}, Kumar and Klefsj\"{o}\cite{kukl} and Balakrishnan et al.\cite{baet}.  $X_{\la}$ is said to follow PHR model, if its survival function can be expressed as $\Fb(x;\la)=[\Fb(x)]^{\la}$, where $\Fb(x)$ is the baseline survival function and $\la>0$.

Recently, Li and Li\cite{lili2} compared $Y_{1:n}$ and $Y^*_{1:n}$ in the sense of the hazard rate order, whenever $ X_{\la_i} \thicksim \Fb(x;\la_i)=\Fb^{\la_i}(x)$ ($ X_{\la^*_i} \thicksim \Fb(x;\la^*_i)=\Fb^{\la^*_i}(x)$), for $i = 1, \ldots,n $, and they share a common Gumbel-Hougaard survival copula, which first introduced by Gumbel\cite{gum2}, of the form $$C_{R}(\ub)=\exp\left(-\left(\sum_{i=1}^n(-\log u_i)^{\theta}\right)^{1/\theta}\right),$$ for $\theta \in [1,\infty)$. They presented a characterization on the hazard rate order of $Y_{1:n}$ as the following lemma.

\begin{lemma}[Li and Li\cite{lili2}]\label{l9}
Let $ X_{\la_1},\ldots,X_{\la_n} $ ($ X_{\la^*_1},\ldots,X_{\la^*_n} $) be non-negative random variables with
$ X_{\la_i} \thicksim \Fb(x;\la_i)=\Fb^{\la_i}(x)$ ($ X_{\la^*_i} \thicksim \Fb(x;\la^*_i)=\Fb^{\la^*_i}(x)$), $i = 1, \ldots,n $, and associated Gumbel-Hougaard copula. Further, suppose that $I_{p_1},\ldots,
I_{p_n}$  ($I_{p^*_1},\ldots,I_{p^*_n} $) is a set of independent Bernoulli random variables, independent of the $X_{\la_i}$'s, with ${\rm E}[I_{p_i}]=p_i$ (${\rm E}[I_{p^*_i}]=p^*_i$), $i=1,\ldots,n$. Then, we have
\begin{eqnarray*}
\prod\limits_{i=1}^n p^*_i \leq \prod\limits_{i=1}^n p_i,~\sum_{i=1}^n \la^{\theta}_i \leq \sum_{i=1}^n {\la^*}^{\theta}_i \Longleftrightarrow Y^*_{1:n} \leq_{{\rm hr}}Y_{1:n}.
\end{eqnarray*}
\end{lemma}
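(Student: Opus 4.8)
The plan is to push the explicit survival-function formula \eqref{dist1} through the special algebraic structure of the Gumbel--Hougaard copula under the PHR form $\Fb(x;\la_i)=\Fb^{\la_i}(x)$, and then read off the hazard rate order from the monotonicity of a survival-function ratio. First I would substitute $u_i=\Fb^{\la_i}(x)$ into the copula. Since $-\log u_i=\la_i\bigl(-\log\Fb(x)\bigr)$, raising to the power $\theta$ and summing factorises the inner sum as $\left(\sum_{i=1}^n\la_i^\theta\right)\bigl(-\log\Fb(x)\bigr)^\theta$; taking the $1/\theta$ power and exponentiating then collapses the whole copula to a single power of the baseline survival function. Combined with \eqref{dist1} this yields
\begin{equation*}
\Gb_{Y_{1:n}}(x)=\Bigl(\prod_{i=1}^n p_i\Bigr)\,\Fb(x)^{\Lambda},\qquad \Lambda:=\Bigl(\sum_{i=1}^n\la_i^\theta\Bigr)^{1/\theta},
\end{equation*}
and the analogous identity for $\Gb_{Y^*_{1:n}}(x)$ with $\Lambda^*:=\bigl(\sum_{i=1}^n{\la^*_i}^\theta\bigr)^{1/\theta}$. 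This reduction is the computational heart of the argument.

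Next I would translate the hazard rate order into monotonicity of a ratio. By the standard characterization (Shaked and Shanthikumar\cite{ss}), $Y^*_{1:n}\leq_{\rm hr}Y_{1:n}$ holds if and only if $\Gb_{Y_{1:n}}(x)/\Gb_{Y^*_{1:n}}(x)$ is non-decreasing in $x$. Using the closed forms above, this ratio equals
\begin{equation*}
\frac{\prod_{i=1}^n p_i}{\prod_{i=1}^n p^*_i}\,\Fb(x)^{\Lambda-\Lambda^*}
\end{equation*}
for $x\ge 0$, while it equals $1$ for $x<0$ because both variables are non-negative. Since $\Fb$ is decreasing on its support and $t\mapsto t^{a}$ is non-decreasing for $a\le 0$, the factor $\Fb(x)^{\Lambda-\Lambda^*}$ is non-decreasing on $(0,\infty)$ exactly when $\Lambda\le\Lambda^*$, which, by the strict monotonicity of $t\mapsto t^{1/\theta}$, is equivalent to $\sum_{i=1}^n\la_i^\theta\le\sum_{i=1}^n{\la^*_i}^\theta$.

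The step I expect to require the most care is the atom at the origin, and it is precisely what produces the condition on the $p_i$'s. Because each $Y_i=I_{p_i}X_{\la_i}$ places mass at $0$, the ratio is identically $1$ for $x<0$ but jumps to $\prod_{i=1}^n p_i/\prod_{i=1}^n p^*_i$ as $x\downarrow 0$ (the baseline factor tending to $1$). For the ratio to be non-decreasing across this jump one needs $\prod_{i=1}^n p_i/\prod_{i=1}^n p^*_i\ge 1$, i.e.\ $\prod_{i=1}^n p^*_i\le\prod_{i=1}^n p_i$; conversely this inequality forces the jump to be upward, and then the monotonicity on $(0,\infty)$ secured in the previous paragraph makes the ratio non-decreasing on all of $\mathbb{R}$. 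Assembling the two regimes gives the equivalence: the forward direction ($\Leftarrow$) recovers each inequality separately by reading off the jump at $0$ and the monotonicity on $(0,\infty)$, while the reverse direction ($\Rightarrow$) combines them, so no hidden hypotheses remain.
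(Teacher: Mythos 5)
Your proof is correct. Note that the paper itself does not prove this statement: it is imported verbatim from Li and Li\cite{lili2} and stated without proof, so there is no in-paper argument to compare against. Your reduction --- substituting the PHR marginals into the Gumbel--Hougaard copula so that \eqref{dist1} collapses to $\Gb_{Y_{1:n}}(x)=\bigl(\prod_{i=1}^n p_i\bigr)[\Fb(x)]^{(\sum_{i=1}^n\la_i^{\theta})^{1/\theta}}$ for $x\ge 0$ --- is exactly the closed form the paper implicitly relies on when it writes the density ratio \eqref{eq3} in the proof of Theorem \ref{t15}, and your subsequent use of the survival-ratio characterization of $\leq_{\rm hr}$ (rather than the pointwise hazard-rate definition, which is awkward here because of the atom at $0$) is the right move. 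The one place to be slightly careful in the converse direction is that monotonicity of $[\Fb(x)]^{\Lambda-\Lambda^*}$ forces $\Lambda\le\Lambda^*$ only where $\Fb$ actually decreases, but since $\Fb$ is a non-degenerate baseline survival function this is harmless; your treatment of the jump at the origin, which isolates the condition $\prod_{i=1}^n p^*_i\le\prod_{i=1}^n p_i$, is exactly right.
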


The likelihood ratio order of $Y_{1:n}$ can also be characterized under some additional assumptions. The following theorems represent this fact.

\begin{theorem}\label{t15}
Under the setup of Lemma \ref{l9}, assume that $\prod_{i=1}^n p_i = \prod_{i=1}^n p^*_i$. Then, we have
\begin{eqnarray*}
\sum_{i=1}^n \la^{\theta}_i =\sum_{i=1}^n {\la^*}^{\theta}_i \Longleftrightarrow Y^*_{1:n} \leq_{{\rm lr}}Y_{1:n}.
\end{eqnarray*}
\end{theorem}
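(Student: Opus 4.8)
The plan is to first reduce the survival function of $Y_{1:n}$ to an explicit power form and then to exploit the atom that $Y_{1:n}$ carries at the origin. Starting from \eqref{dist1} and substituting the PHR margins $\Fb(x;\la_i)=\Fb^{\la_i}(x)$ into the Gumbel--Hougaard copula, the inner arguments become $-\log\Fb^{\la_i}(x)=\la_i\bigl(-\log\Fb(x)\bigr)$, so that $\sum_i(-\log u_i)^{\theta}=\bigl(-\log\Fb(x)\bigr)^{\theta}\sum_i\la_i^{\theta}$ and the copula collapses to $\Fb^{A}(x)$ with $A:=\bigl(\sum_{i=1}^n\la_i^{\theta}\bigr)^{1/\theta}$. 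Writing $P:=\prod_{i=1}^n p_i$, I thus get $\Gb_{Y_{1:n}}(x)=P\,\Fb^{A}(x)$ for $x\ge 0$, and likewise $\Gb_{Y^*_{1:n}}(x)=P^*\,\Fb^{A^*}(x)$ with $P^*:=\prod_{i=1}^n p^*_i$ and $A^*:=\bigl(\sum_{i=1}^n(\la^*_i)^{\theta}\bigr)^{1/\theta}$. This is a routine substitution; the essential point is that the Gumbel--Hougaard structure together with the PHR assumption turns the smallest claim amount into a scaled power of a single baseline survival function.

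Next I would record the mixed nature of $Y_{1:n}$. Since $Y_{1:n}=0$ exactly when at least one indicator vanishes, it has an atom of mass $1-P$ at the origin and, on $(0,\infty)$, an absolutely continuous part with generalized density $g_{Y_{1:n}}(x)=P\,A\,\Fb^{A-1}(x)f(x)$, where $f=-\Fb'$ is the baseline density. I take the dominating measure to be the point mass at $0$ plus Lebesgue measure on $(0,\infty)$, so that $Y^*_{1:n}\le_{\rm lr}Y_{1:n}$ is read off from the monotonicity of the ratio of these generalized densities.

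For the implication $\sum\la_i^{\theta}=\sum(\la^*_i)^{\theta}\Rightarrow Y^*_{1:n}\le_{\rm lr}Y_{1:n}$: under the standing hypothesis $P=P^*$, equality of the power sums gives $A=A^*$, whence $\Gb_{Y_{1:n}}\equiv\Gb_{Y^*_{1:n}}$, the two variables are identically distributed, and the likelihood ratio order holds trivially with constant ratio $1$. The substantive direction is the converse. Assuming $Y^*_{1:n}\le_{\rm lr}Y_{1:n}$, I would form, using $P=P^*$ (so that the atomic values give $(1-P)/(1-P^*)=1$ and the factor $f$ cancels on $(0,\infty)$),
\[
\frac{g_{Y_{1:n}}(x)}{g_{Y^*_{1:n}}(x)}=
\begin{cases}
1, & x=0,\\
\dfrac{A}{A^*}\,\Fb^{\,A-A^*}(x), & x>0,
\end{cases}
\]
and require it to be non-decreasing on $[0,\infty)$.

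Here lies the crux, and the role of the atom. On $(0,\infty)$ the factor $\Fb^{A-A^*}(x)$ is non-decreasing precisely when $A\le A^*$, since $\Fb$ is non-increasing; this alone would only yield the weaker conclusion $A\le A^*$ already implicit in Lemma~\ref{l9}. The atom supplies the reverse inequality: as $x\downarrow 0$ one has $\Fb(x)\to 1$, so the continuous branch tends to $A/A^*$, and compatibility with the atomic value $1$ forces $A/A^*\ge 1$, that is $A\ge A^*$. Combining the two gives $A=A^*$, equivalently $\sum_{i=1}^n\la_i^{\theta}=\sum_{i=1}^n(\la^*_i)^{\theta}$, as required. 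I expect the main obstacle to be conceptual rather than computational: one must recognize that the likelihood ratio order has to be taken in the generalized, mixed-measure sense, because it is exactly the jump condition at the atom that upgrades the hazard-rate-type inequality $A\le A^*$ to rigid equality, and ignoring the mass at $0$ would make the stated equivalence false.
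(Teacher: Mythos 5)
Your proposal is correct and follows essentially the same route as the paper: both write the (generalized) density ratio with an atom at the origin of mass $1-\prod_i p_i$ and a power-form continuous part $\bigl(A/A^*\bigr)\Fb^{A-A^*}(x)$ on $(0,\infty)$, then observe that monotonicity of the ratio forces $A\le A^*$ from the continuous branch and $A\ge A^*$ from the jump condition at zero, hence equality. The only cosmetic difference is that you make the dominating measure and the derivation of $\Gb_{Y_{1:n}}(x)=P\,\Fb^{A}(x)$ explicit, whereas the paper states the ratio directly.
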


\begin{proof}
It can be easily verified that the ratio of density functions can be written as follows:
\begin{equation}\label{eq3}
\frac{g_{Y_{1:n}}(x)}{g_{Y^*_{1:n}}(x)}=\frac{1-\prod_{i=1}^n p_i}{1-\prod_{i=1}^n p^*_i} I_{[x=0]}+\frac{\prod_{i=1}^n p_i}{\prod_{i=1}^n p^*_i} \left(\frac{\sum_{i=1}^n \la^{\theta}_i}{\sum_{i=1}^n {\la^*}^{\theta}_i}\right)^{1/\theta} [\Fb(x)]^{(\sum_{i=1}^n \la^{\theta}_i)^{1/\theta}-(\sum_{i=1}^n {\la^*}^{\theta}_i)^{1/\theta}}I_{[x>0]},
\end{equation}
where, $I_A$ denotes the indicator function. Under the assumption $\prod_{i=1}^n p_i = \prod_{i=1}^n p^*_i$, we have that
\begin{equation*}
\frac{g_{Y_{1:n}}(x)}{g_{Y^*_{1:n}}(x)}=I_{[x=0]}+ \left(\frac{\sum_{i=1}^n \la^{\theta}_i}{\sum_{i=1}^n {\la^*}^{\theta}_i}\right)^{1/\theta} [\Fb(x)]^{(\sum_{i=1}^n \la^{\theta}_i)^{1/\theta}-(\sum_{i=1}^n {\la^*}^{\theta}_i)^{1/\theta}}I_{[x>0]}.
\end{equation*}
Clearly, $\frac{g_{Y_{1:n}}(x)}{g_{Y^*_{1:n}}(x)}$ is increasing in $x\geq 0$, if and only if $\left(\frac{\sum_{i=1}^n \la^{\theta}_i}{\sum_{i=1}^n {\la^*}^{\theta}_i}\right)^{1/\theta} [\Fb(0)]^{(\sum_{i=1}^n \la^{\theta}_i)^{1/\theta}-(\sum_{i=1}^n {\la^*}^{\theta}_i)^{1/\theta}} \geq \frac{g_{Y_{1:n}}(0)}{g_{Y^*_{1:n}}(0)}=1$ and $[\Fb(x)]^{(\sum_{i=1}^n \la^{\theta}_i)^{1/\theta}-(\sum_{i=1}^n {\la^*}^{\theta}_i)^{1/\theta}}$ is increasing in $x\geq 0$. The former is equivalent to $\sum_{i=1}^n \la^{\theta}_i \geq \sum_{i=1}^n {\la^*}^{\theta}_i$ and the latter is equivalent to $\sum_{i=1}^n \la^{\theta}_i \leq \sum_{i=1}^n {\la^*}^{\theta}_i$.
\end{proof}

\begin{theorem}\label{t16}
Under the setup of Lemma \ref{l9}, assume that $\sum_{i=1}^n \la^{\theta}_i =\sum_{i=1}^n {\la^*}^{\theta}_i$. Then, we have
\begin{eqnarray*}
\prod_{i=1}^n p^*_i \leq \prod_{i=1}^n p_i \Longleftrightarrow Y^*_{1:n} \leq_{{\rm lr}}Y_{1:n}.
\end{eqnarray*}
\end{theorem}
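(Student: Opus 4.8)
The plan is to start from the density ratio \eqref{eq3} already derived in the proof of Theorem \ref{t15}, which remains valid verbatim in the present setup, and then to exploit the new hypothesis $\sum_{i=1}^n \la^{\theta}_i =\sum_{i=1}^n {\la^*}^{\theta}_i$ to collapse that expression. Under this equality the exponent $\left(\sum_{i=1}^n \la^{\theta}_i\right)^{1/\theta}-\left(\sum_{i=1}^n {\la^*}^{\theta}_i\right)^{1/\theta}$ vanishes, so $[\Fb(x)]^{(\cdots)}\equiv 1$, and likewise the prefactor $\left(\frac{\sum_{i=1}^n \la^{\theta}_i}{\sum_{i=1}^n {\la^*}^{\theta}_i}\right)^{1/\theta}$ equals $1$. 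Hence the ratio reduces to the two-valued step function
\begin{equation*}
\frac{g_{Y_{1:n}}(x)}{g_{Y^*_{1:n}}(x)}=\frac{1-\prod_{i=1}^n p_i}{1-\prod_{i=1}^n p^*_i}\,I_{[x=0]}+\frac{\prod_{i=1}^n p_i}{\prod_{i=1}^n p^*_i}\,I_{[x>0]},
\end{equation*}
taking one constant value at the atom $x=0$ and another constant value on $x>0$.

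Next I would observe that, since this ratio is piecewise constant on $\mathbb{R}_+$ with a single jump at the origin, it is increasing in $x\geq 0$ precisely when its value at $x=0$ does not exceed its value on $x>0$; that is, when
\begin{equation*}
\frac{1-\prod_{i=1}^n p_i}{1-\prod_{i=1}^n p^*_i}\leq \frac{\prod_{i=1}^n p_i}{\prod_{i=1}^n p^*_i}.
\end{equation*}
Writing $P=\prod_{i=1}^n p_i$ and $P^*=\prod_{i=1}^n p^*_i$, both lying in $(0,1)$ so that every denominator is positive, I would clear denominators to rewrite this inequality as $P^*(1-P)\leq P(1-P^*)$, which simplifies to $P^*\leq P$. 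This chain of equivalences delivers both directions of the stated characterization simultaneously, so no separate converse argument is needed.

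The computation is essentially routine once \eqref{eq3} is available; the only delicate point is the bookkeeping at the atom $x=0$. Because $Y_{1:n}$ carries a mass at $0$ arising from the event that some $I_{p_i}=0$, the density is understood in the mixed sense already adopted in \eqref{eq3}, and monotonicity of the ratio must be verified across the jump from $x=0$ to $x>0$ rather than merely on the absolutely continuous part. Correctly handling that jump, and checking that the direction of the cross-multiplied inequality indeed matches $\prod_{i=1}^n p^*_i\leq\prod_{i=1}^n p_i$, is the main (and only mild) obstacle.
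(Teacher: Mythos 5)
Your proposal is correct and follows essentially the same route as the paper's own proof: both start from the density ratio \eqref{eq3}, use the hypothesis $\sum_{i=1}^n \la^{\theta}_i =\sum_{i=1}^n {\la^*}^{\theta}_i$ to collapse it to the two-valued step function, and reduce monotonicity across the jump at $x=0$ to the cross-multiplied inequality $\prod_{i=1}^n p^*_i \leq \prod_{i=1}^n p_i$. Your explicit handling of the denominators and the atom at zero is a slightly more careful writeup of the identical argument.
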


\begin{proof}
By the assumption $\sum_{i=1}^n \la^{\theta}_i = \sum_{i=1}^n {\la^*}^{\theta}_i$, the relation \eqref{eq3} can be rewritten as
\begin{equation*}
\frac{g_{Y_{1:n}}(x)}{g_{Y^*_{1:n}}(x)}=\frac{1-\prod_{i=1}^n p_i}{1-\prod_{i=1}^n p^*_i} I_{[x=0]}+\frac{\prod_{i=1}^n p_i}{\prod_{i=1}^n p^*_i} I_{[x>0]}.
\end{equation*}
Thus,  $\frac{g_{Y_{1:n}}(x)}{g_{Y^*_{1:n}}(x)}$ is increasing in $x\geq 0$, if and only if $\frac{\prod_{i=1}^n p_i}{\prod_{i=1}^n p^*_i} \geq \frac{1-\prod_{i=1}^n p_i}{1-\prod_{i=1}^n p^*_i}$, which is equivalent to $\prod_{i=1}^n p_i \geq \prod_{i=1}^n p^*_i$.
\end{proof}




\subsection{Bounds for the survival function of the smallest claim amount}\label{sub2}
Obtaining some bounds for $\Gb_{Y_{1:n}}(x)$ can be included the important informations for the insurance companies. The following theorem presents useful lower and upper bounds for $\Gb_{Y_{1:n}}(x)$ when the insurance company knows the associated copula, exactly. 
\begin{theorem}\label{t4}
Let $ X_{\la_1},\ldots,X_{\la_n} $ be non-negative random variables with
$ X_{\la_i} \thicksim \Fb(x;\la_i)$, $i = 1, \ldots,n $, and the associated copula $C_R$. Further, suppose that $I_{p_1},\ldots,
I_{p_n}$ is a set of independent Bernoulli random variables, independent of the $X_{\la_i}$'s, with ${\rm E}[I_{p_i}]=p_i$, $i=1,\ldots,n$. Assume that the following conditions hold:
\begin{itemize}
\item[{\rm (i)}] $\Fb(x;\la)$ is increasing and concave in $\la$ for any $x\in \mathbb{R}_+$;
\item[{\rm (ii)}] $C_R$ is Schur-concave.
\end{itemize}
Then, we have
\begin{eqnarray*}
\left(\prod\limits_{i=1}^n p_i\right) \delta_{C_R}\left(\Fb(x;\la_{1:n})\right)\leq \Gb_{Y_{1:n}}(x)\leq \left(\prod\limits_{i=1}^n p_i\right) \delta_{C_R}\left(\Fb(x;\bar{\la})\right).
\end{eqnarray*}
\end{theorem}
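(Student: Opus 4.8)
The plan is to start from the explicit survival function already derived in the proof of Theorem \ref{t1}, namely
\[
\Gb_{Y_{1:n}}(x)=\left(\prod\limits_{i=1}^n p_i\right) C_R\left(\Fb(x;\la_1),\ldots,\Fb(x;\la_n)\right),
\]
and then sandwich the copula factor between the two diagonal sections. Writing $u_i=\Fb(x;\la_i)$, after cancelling the common factor $\prod_{i=1}^n p_i$ the two inequalities reduce to $\delta_{C_R}\!\left(\Fb(x;\la_{1:n})\right)\leq C_R(u_1,\ldots,u_n)\leq \delta_{C_R}\!\left(\Fb(x;\bar{\la})\right)$, which I would establish separately.

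For the lower bound I would use only the monotonicity part of condition (i) together with the elementary fact that every copula is increasing in each argument. Since $\Fb(x;\la)$ is increasing in $\la$, the smallest order statistic $\la_{1:n}=\min_i\la_i$ yields the smallest survival value, so $\Fb(x;\la_{1:n})=\min_i u_i$. Replacing every coordinate $u_i$ by the common smaller value $\min_i u_i$ can only decrease $C_R$, giving $C_R(\min_i u_i,\ldots,\min_i u_i)\leq C_R(u_1,\ldots,u_n)$, which is exactly the claimed lower bound.

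For the upper bound I would reuse the Schur-concavity machinery already developed for Theorem \ref{t2}. Setting $\Psi(\lab)=C_R\!\left(\Fb(x;\la_1),\ldots,\Fb(x;\la_n)\right)$, condition (ii) makes $-C_R$ decreasing and Schur-convex, while condition (i) makes $\Fb(x;\cdot)$ increasing and concave; hence Lemma \ref{mkl2} shows that $-\Psi$ is decreasing and Schur-convex, i.e.\ $\Psi$ is Schur-concave in $\lab$. The remaining ingredient is the standard majorization $(\bar{\la},\ldots,\bar{\la})\mathop \preceq \limits^{{\mathop{\rm m}} }(\la_1,\ldots,\la_n)$, valid because a constant vector is majorized by any vector sharing its total. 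Applying Schur-concavity to this majorization yields $\Psi(\la_1,\ldots,\la_n)\leq \Psi(\bar{\la},\ldots,\bar{\la})=\delta_{C_R}\!\left(\Fb(x;\bar{\la})\right)$, the claimed upper bound.

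I expect the only delicate point to be that the upper bound rests on both hypotheses at once: the concavity in (i) and the Schur-concavity in (ii) must be combined through Lemma \ref{mkl2}, and one should note that the monotonicity needed to call $-C_R$ decreasing is precisely the coordinatewise increasingness built into the definition of a copula. By contrast the lower bound requires only monotonicity and is essentially immediate, so no serious obstacle is anticipated there.
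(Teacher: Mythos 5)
Your proof is correct, but it is organized differently from the paper's. The paper disposes of both bounds in one line by observing the weak-supermajorization chain $(\bar{\la},\ldots,\bar{\la})\mathop \preceq \limits^{{\mathop{\rm w}} }(\la_1,\ldots,\la_n) \mathop \preceq \limits^{{\mathop{\rm w}} }(\la_{1:n},\ldots,\la_{1:n})$ and invoking Theorem \ref{t3} twice, so both inequalities pass through the same machinery (Lemma \ref{mkl} plus Lemma \ref{mkl2}, i.e.\ decreasingness \emph{and} Schur-convexity of $-C_R\left(\Fb(x;\cdot),\ldots,\Fb(x;\cdot)\right)$). You instead split the two bounds and use the minimal hypotheses for each: the lower bound follows from nothing more than the coordinatewise monotonicity of $C_R$ and the increasingness of $\Fb(x;\cdot)$ (so neither the concavity in (i) nor the Schur-concavity in (ii) is actually needed there), while the upper bound uses exact majorization $(\bar{\la},\ldots,\bar{\la})\mathop \preceq \limits^{{\mathop{\rm m}} }(\la_1,\ldots,\la_n)$ together with Schur-concavity of the composed map, without needing the weak-majorization refinement. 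This buys a sharper accounting of which assumptions drive which inequality (and shows the lower bound holds under weaker hypotheses than stated), at the cost of not reusing Theorem \ref{t3} as a black box; both arguments are sound.
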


\begin{proof}
The fact that $(\bar{\la},\ldots,\bar{\la})\mathop \preceq \limits^{{\mathop{\rm w}} }(\la_1,\ldots,\la_n) \mathop \preceq \limits^{{\mathop{\rm w}} }  (\la_{1:n},\ldots,\la_{1:n})$ and Theorem \ref{t3} imply the required result.
\end{proof}

Usually, the associated copula is unknown for a company, while the sign of dependency is well-known. Naturally, one may wonder whether presenting the lower and upper bounds for $\Gb_{Y_{1:n}}(x)$ is possible? The following theorem has a positive answer for this question.

\begin{theorem}\label{t5}
Under the setup of Theorem \ref{t4}, suppose that the following conditions hold:
\begin{itemize}
\item[{\rm (i)}] $\Fb(x;\la)$ is increasing and concave in $\la$ for any $x\in \mathbb{R}_+$;
\item[{\rm (ii)}] $C_R$ is {\rm PUOD}.
\end{itemize}
Then, we have
\begin{eqnarray}\label{eq2}
\left(\prod\limits_{i=1}^n p_i\right) \Fb^n(x;\la_{1:n})\leq \Gb_{Y_{1:n}}(x)\leq \left(\prod\limits_{i=1}^n p_i\right) \Fb(x;\la_{1:n}).
\end{eqnarray}
\end{theorem}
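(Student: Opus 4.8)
The plan is to reduce everything to the closed form of the survival function already obtained in the proof of Theorem \ref{t1}. By equation \eqref{dist1}, for every $x\geq 0$,
\[
\Gb_{Y_{1:n}}(x)=\left(\prod_{i=1}^n p_i\right) C_R\left(\Fb(x;\la_1),\ldots,\Fb(x;\la_n)\right),
\]
so, since the factor $\prod_{i=1}^n p_i$ is common to all three terms of \eqref{eq2}, it suffices to sandwich the copula value $C_R(\Fb(x;\la_1),\ldots,\Fb(x;\la_n))$ between $\Fb^n(x;\la_{1:n})$ and $\Fb(x;\la_{1:n})$.

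For the upper bound I would invoke the Fr\'echet--Hoeffding upper bound from Lemma \ref{l6}, valid for every copula, which gives $C_R(\ub)\leq \min(u_1,\ldots,u_n)$. Evaluating at $\ub=(\Fb(x;\la_1),\ldots,\Fb(x;\la_n))$ yields $C_R(\Fb(x;\la_1),\ldots,\Fb(x;\la_n))\leq \min_{1\leq i\leq n}\Fb(x;\la_i)$. Condition (i) guarantees that $\Fb(x;\la)$ is increasing in $\la$, so this minimum is attained at the smallest parameter $\la_{1:n}$, i.e. $\min_{1\leq i\leq n}\Fb(x;\la_i)=\Fb(x;\la_{1:n})$, which is exactly the required upper estimate.

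For the lower bound I would use the PUOD hypothesis (ii): by Definition \ref{def2}, $C_R(\ub)\geq \prod_{i=1}^n u_i$, so that $C_R(\Fb(x;\la_1),\ldots,\Fb(x;\la_n))\geq \prod_{i=1}^n \Fb(x;\la_i)$. Since $\la_i\geq \la_{1:n}$ for each $i$ and $\Fb(x;\la)$ is increasing in $\la$, every factor obeys $\Fb(x;\la_i)\geq \Fb(x;\la_{1:n})$, whence $\prod_{i=1}^n \Fb(x;\la_i)\geq \Fb^n(x;\la_{1:n})$. Combining the two estimates and multiplying through by $\prod_{i=1}^n p_i$ delivers \eqref{eq2}.

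I do not expect a genuine obstacle: the argument is a direct pairing of the two classical copula bounds---the Fr\'echet--Hoeffding bound on the upper side and the PUOD bound on the lower side---with the monotonicity of $\Fb$ in $\la$. It is worth remarking that only the increasingness part of condition (i) is actually used, the concavity playing no role here. The one point demanding care is keeping track of the direction of monotonicity: because $\Fb(x;\cdot)$ is increasing, the smallest parameter $\la_{1:n}$ \emph{minimizes} the marginal survival function, which is precisely what makes $\la_{1:n}$ the correct index for the upper bound and simultaneously supplies the uniform lower estimate $\Fb(x;\la_i)\geq\Fb(x;\la_{1:n})$ for the lower bound.
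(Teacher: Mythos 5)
Your proof is correct, and for the upper bound it coincides with the paper's argument (Fr\'echet--Hoeffding upper bound from Lemma \ref{l6} plus increasingness of $\Fb(x;\cdot)$, so that the minimum is attained at $\la_{1:n}$). The difference lies in the lower bound. The paper first passes from $C_R$ to the independent copula $C^*_R(\ub)=\prod_i u_i$ via the PUOD hypothesis, and then handles the step from $C^*_R(\Fb(x;\la_1),\ldots,\Fb(x;\la_n))$ down to $C^*_R(\Fb(x;\la_{1:n}),\ldots,\Fb(x;\la_{1:n}))$ by invoking Theorem \ref{t3} together with the Schur-concavity of the independent copula (Lemma \ref{l7}) and the weak supermajorization $(\la_1,\ldots,\la_n)\mathop\preceq\limits^{\rm w}(\la_{1:n},\ldots,\la_{1:n})$ --- machinery that formally requires the concavity in condition (i) through Lemma \ref{mkl2}. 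You replace that step by the elementary observation that $\prod_i\Fb(x;\la_i)\geq\Fb^n(x;\la_{1:n})$ factor by factor, using only monotonicity. This is a genuine (if small) simplification: it shortcuts the majorization apparatus entirely, and your remark that the concavity in (i) is never used is accurate for your route, whereas the paper's route nominally consumes it via Theorem \ref{t3}. Both arguments rest on the same closed form \eqref{dist1} and the same two copula inequalities, so the conclusions and hypotheses match; yours just exposes that the hypotheses are stronger than needed for this particular statement.
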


\begin{proof}
Let $C^*_R(\ub)=\prod_{i=1}^n u_i$ be the independent copula. Since $C^*_R$ is an Archimedean copula, so  Lemma \ref{l7} guarantees the Schur-concavity of $C^*_R$. Thus, using the fact that $(\la_1,\ldots,\la_n) \mathop \preceq \limits^{{\mathop{\rm w}} }  (\la_{1:n},\ldots,\la_{1:n})$, Theorem \ref{t3} and Definition \ref{def2}, we have
\begin{eqnarray*}
\Gb_{Y_{1:n}}(x)&=&\left(\prod\limits_{i=1}^n p_i\right)~ C_R\left(\Fb(x,\la_1),\ldots,\Fb(x,\la_n)\right)\\
&\geq& \left(\prod\limits_{i=1}^n p_i\right)~ C^*_R\left(\Fb(x,\la_{1:n}),\ldots,\Fb(x,\la_{1:n})\right)\\
&=&\left(\prod\limits_{i=1}^n p_i\right) \Fb^n(x;\la_{1:n}),
\end{eqnarray*}
which proves the first inequality in \eqref{eq2}. On the other hand, the Fr\'echet-Hoeffding upper bound in Lemma \ref{l6} and increasing property of $\Fb(x;\la)$ in $\la$,  immediately proves the second inequality in \eqref{eq2}. Hence, the proof is completed. 
\end{proof}
 
The proportional reversed hazard rate (PRHR) model, which introduced by Gupta et al.\cite{gup}, is a flexible family of distributions existing in reliability theory, and can be used in actuarial science and other fields.  $X_{\la}$ is said to follow PRHR model, if its distribution function can be expressed as $F(x;\la)=F^{\la}(x)$, where $F(x)$ is the baseline distribution function and $\la>0$.

The following corollaries provide a lower and upper bounds for the survival function of the smallest
claim amount in a portfolio of risks, whenever the marginal distributions of severities belonging to the PRHR model.


\begin{corollary}\label{t7}
Let $F(x;\la_i)=F^{\la_i}(x)$, for $i=1,\ldots,n$. Under the setup of Theorem \ref{t4}, suppose that $C_R$ is Schur-concave. Then, we have
\begin{eqnarray*}
\left(\prod\limits_{i=1}^n p_i\right) \delta_{C_R}\left(1-F^{\la_{1:n}}(x)\right)\leq \Gb_{Y_{1:n}}(x)\leq \left(\prod\limits_{i=1}^n p_i\right) \delta_{C_R}\left(1-F^{\bar{\la}}(x)\right).
\end{eqnarray*}
\end{corollary}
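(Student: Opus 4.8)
The plan is to read this corollary as the specialization of Theorem~\ref{t4} to the PRHR family, so that the whole task reduces to checking the hypotheses of that theorem. Condition~(ii) of Theorem~\ref{t4}, namely the Schur-concavity of $C_R$, is assumed outright in the corollary, so the only thing left to establish is condition~(i): that the marginal survival function is increasing and concave in $\la$ for every fixed $x\in\mathbb{R}_+$. To this end I would first write down the survival function in the PRHR model. From $F(x;\la)=F^{\la}(x)$ we obtain $\Fb(x;\la)=1-F(x;\la)=1-F^{\la}(x)$. Fixing $x$ and abbreviating $u=F(x)\in[0,1]$, the function to analyze is therefore $\la\mapsto 1-u^{\la}$.

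Next I would verify monotonicity and concavity by differentiating in $\la$. The first derivative is $-u^{\la}\ln u$, which is nonnegative because $u\in[0,1]$ forces $\ln u\leq 0$; hence $\Fb(x;\la)$ is increasing in $\la$. The second derivative is $-u^{\la}(\ln u)^2\leq 0$, so $\Fb(x;\la)$ is concave in $\la$. This confirms condition~(i) for the PRHR model, with the boundary values $u\in\{0,1\}$ causing no difficulty.

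With both conditions of Theorem~\ref{t4} in force, I would invoke that theorem directly and substitute $\Fb(x;\la_{1:n})=1-F^{\la_{1:n}}(x)$ and $\Fb(x;\bar{\la})=1-F^{\bar{\la}}(x)$ into its two bounds, which immediately yields the asserted inequalities. I do not expect any genuine obstacle here: the mathematical content is exhausted by the elementary calculus check above, and the remainder is a clean application of the already-proved Theorem~\ref{t4}.
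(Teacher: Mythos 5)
Your proposal is correct and follows the same route as the paper: both reduce the corollary to checking conditions (i) and (ii) of Theorem \ref{t4} and then invoke that theorem. The only difference is that the paper dismisses condition (i) with ``clearly,'' whereas you supply the (correct) derivative computation showing $\la\mapsto 1-F^{\la}(x)$ is increasing and concave.
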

\begin{proof}
Clearly, $\Fb(x;\la)=1-F^{\la}(x)$ and $C_R$ satisfy the conditions (i) and (ii) of Theorem \ref{t4}, respectively. Hence, Theorem \ref{t4} completes the proof.
\end{proof}

\begin{corollary}\label{t8}
Let $F(x;\la_i)=F^{\la_i}(x)$, for $i=1,\ldots,n$. Under the setup of Theorem \ref{t4}, suppose that $C_R$ is {\rm PUOD}. Then, we have
\begin{eqnarray*}
\left(\prod\limits_{i=1}^n p_i\right) \left(1-F^{\la_{1:n}}(x)\right)^n \leq \Gb_{Y_{1:n}}(x)\leq \left(\prod\limits_{i=1}^n p_i\right) \left(1-F^{\la_{1:n}}(x)\right).
\end{eqnarray*}
\end{corollary}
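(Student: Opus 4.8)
The plan is to recognize that Corollary~\ref{t8} is the \textrm{PUOD} counterpart of Corollary~\ref{t7}: it specializes Theorem~\ref{t5} to the \textrm{PRHR} model in exactly the way that Corollary~\ref{t7} specializes Theorem~\ref{t4}. Since the hypotheses and conclusion of Theorem~\ref{t5} are phrased for a generic marginal survival function $\Fb(x;\la)$ meeting its condition~(i), together with a \textrm{PUOD} copula in condition~(ii), the entire task reduces to checking that the \textrm{PRHR} survival function fits those two hypotheses and then reading the bounds of Theorem~\ref{t5} off with that particular $\Fb$.

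First I would write the survival function explicitly. Under the \textrm{PRHR} model one has $F(x;\la)=F^{\la}(x)$, hence $\Fb(x;\la)=1-F^{\la}(x)$. Next I would verify condition~(i) of Theorem~\ref{t5}, namely that $\Fb(x;\la)$ is increasing and concave in $\la$ for each fixed $x\in\mathbb{R}_+$. Setting $u=F(x)\in[0,1]$ and differentiating $1-u^{\la}$ once and twice in $\la$, the sign of both derivatives is controlled by $\log u\leq 0$: the first $\la$-derivative comes out nonnegative (increasingness) and the second nonpositive (concavity). Condition~(ii) of Theorem~\ref{t5}, that $C_R$ is \textrm{PUOD}, is assumed directly in the statement.

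With both conditions of Theorem~\ref{t5} in force, I would invoke that theorem to obtain
\[
\left(\prod_{i=1}^n p_i\right)\Fb^n(x;\la_{1:n})\leq \Gb_{Y_{1:n}}(x)\leq \left(\prod_{i=1}^n p_i\right)\Fb(x;\la_{1:n}),
\]
and then substitute $\Fb(x;\la_{1:n})=1-F^{\la_{1:n}}(x)$ into both bounds to recover the claimed inequalities. I do not anticipate a genuine obstacle: the only computation is the one-line concavity check in the second step, and the final substitution is immediate, so the argument is a routine specialization entirely parallel to the proof of Corollary~\ref{t7}.
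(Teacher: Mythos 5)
Your proposal is correct and follows exactly the paper's route: verify that the PRHR survival function $1-F^{\la}(x)$ is increasing and concave in $\la$ (a check the paper dismisses as obvious but which you carry out explicitly via the sign of $\log F(x)$), note that PUOD is assumed, and then read off the bounds from Theorem \ref{t5}. No gaps.
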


\begin{proof}
It is clear that the conditions (i) and (ii) of Theorem \ref{t5} are satisfied. Hence, Theorem \ref{t5} implies the required result.
\end{proof}

The following example provides a numerical example to illustrate the validity of corollaries \ref{t7} and \ref{t8}.
\begin{example}\label{ex1}
Let $X_{\la_i}\thicksim F(x;\la_i)=(1-e^{-x})^{\la_i} $, for $i = 1, 2,3$, with the associated Frank copula, which introduced by Frank\cite{frank}, of the form 
$$C_{R}(u_1,u_2,u_3)=-\frac{1}{\theta}\log \left(1+\frac{(e^{-\theta u_1}-1)(e^{-\theta u_2}-1)(e^{-\theta u_3}-1)}{(e^{-\theta}-1)^2}\right),$$ where $\theta\in(0,\infty)$. Further, suppose that $I_{p_1}, I_{p_2}, I_{p_3}$ is a set of independent Bernoulli random variables, independent of the $X_{\la_i}$'s, with ${\rm E}[I_{p_i}]=p_i$ , for $i=1,2,3$. We take $(\la_1,\la_2,\la_3)=(3,6,2)$, $(p_1,p_2,p_3)=(0.5,0.6,0.1)$ and $\theta=5$. According to Nelsen\cite{nel}, $C_{R}$ is an Archimedean copula and according to Lemma \ref{l7} is Schur-concave. Also it is a {\rm PUOD} copula. Thus, the conditions of theorems \ref{t7} and \ref{t8} are satisfied. Figure \ref{fig1} represents the plots of the survival function of the smallest claim amounts and the proposed bounds in corollaries \ref{t7} and \ref{t8}.
\end{example}

\begin{figure}[ht]
\centerline{\includegraphics[width=7cm,height=7cm]{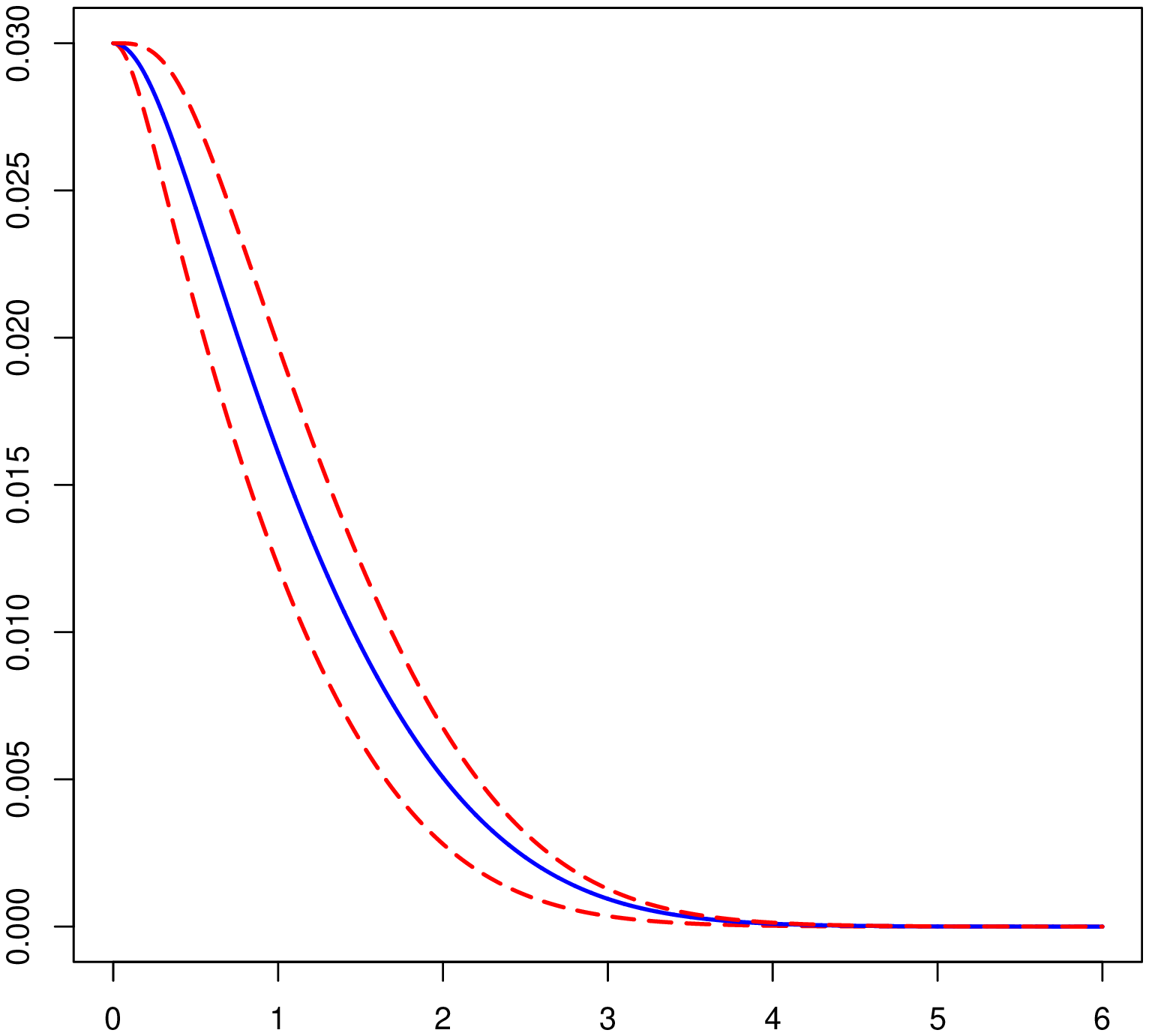}
\includegraphics[width=7cm,height=7cm]{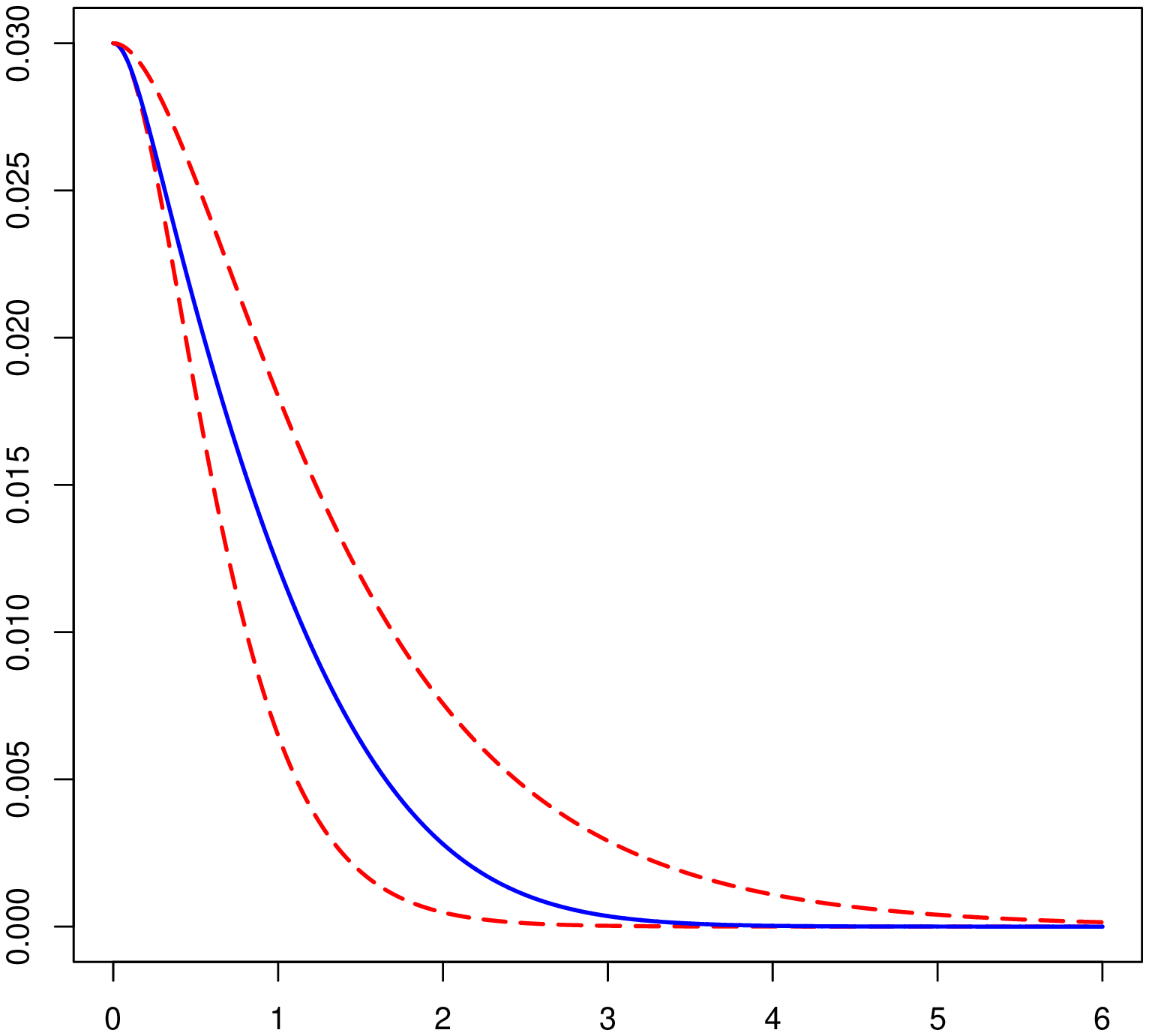}}
\vspace{-0.8cm} \caption{\small{ Plots of the survival function of the smallest claim amounts and the proposed bounds in Corollary \ref{t7} (left) and Corollary \ref{t8} (right) for Example \ref{ex1}.}}\label{fig1}
\end{figure}

Recently, Barmalzan et al.\cite{bar2018} considerd the Marshall-Olkin extended exponential distribution as the claim amount distribution in portfolios of risks and compared aggregate claim amounts in two heterogeneous portfolios. Marshall-Olkin distribution, which introduced by Marshall and Olkin\cite{marolk} is a spacial case of a wide range family of distributions, called Harris family. Aly and Benkherouf\cite{aly} used the Harris distribution introduced by Harris\cite{harris}, and generated the Harris family. $X_{\la}$ is said to follow the Harris family, if its survival function is given by
\begin{equation*}
\Fb(x;\la,\theta)=\left(\frac{\la \Fb^{\theta}(x)}{1-(1-\la)\Fb^{\theta}(x)}\right)^{1/\theta},\quad  \la>0,~\theta>0,
 \end{equation*}
where, $\Fb(x)$ is the baseline survival function.


The following corollaries provide a lower and upper bounds for the survival function of the smallest
claim amount in a portfolio of risks, whenever the marginal distributions of severities belonging to the Harris family.

\begin{corollary}\label{t10}
Let $\Fb(x;\la_i,\theta)=\left(\frac{\la_i \Fb^{\theta}(x)}{1-(1-\la_i)\Fb^{\theta}(x)}\right)^{1/\theta}$, for $\theta\geq 1$ and $i=1,\ldots,n$. Under the setup of Theorem \ref{t4}, suppose that $C_R$ is Schur-concave. Then, we have
\begin{eqnarray*}
\left(\prod\limits_{i=1}^n p_i\right) \delta_{C_R}\left(\left(\frac{\la_{1:n} \Fb^{\theta}(x)}{1-(1-\la_{1:n})\Fb^{\theta}(x)}\right)^{1/\theta}\right)\leq \Gb_{Y_{1:n}}(x)\leq \left(\prod\limits_{i=1}^n p_i\right) \delta_{C_R}\left(\left(\frac{\bar{\la} \Fb^{\theta}(x)}{1-(1-\bar{\la})\Fb^{\theta}(x)}\right)^{1/\theta}\right).
\end{eqnarray*}
\end{corollary}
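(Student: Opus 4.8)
The plan is to reduce this corollary to Theorem~\ref{t4} in exactly the way Corollary~\ref{t7} was reduced, by checking that the Harris-family survival function meets the two hypotheses of that theorem. Since $C_R$ is assumed Schur-concave, condition (ii) of Theorem~\ref{t4} holds automatically, so the only substantive task is to verify condition (i): that for every fixed $x\in\mathbb{R}_+$ the map $\la\mapsto\Fb(x;\la,\theta)=\left(\frac{\la\Fb^{\theta}(x)}{1-(1-\la)\Fb^{\theta}(x)}\right)^{1/\theta}$ is increasing and concave in $\la$, provided $\theta\geq 1$. Once this is in hand, Theorem~\ref{t4} immediately yields the two-sided bound with $\delta_{C_R}$ evaluated at $\Fb(x;\la_{1:n},\theta)$ and $\Fb(x;\bar{\la},\theta)$, which is precisely the asserted inequality.

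To verify condition (i), I would fix $x$ and set $s=\Fb^{\theta}(x)$, noting that $s\in[0,1]$ and that $s$ is constant for the purpose of differentiating in $\la$. Then $\Fb(x;\la,\theta)=\left(g(\la)\right)^{1/\theta}$, where $g(\la)=\frac{\la s}{(1-s)+\la s}$ is a linear-fractional function of $\la$. A direct computation gives $g'(\la)=\frac{s(1-s)}{((1-s)+\la s)^{2}}\geq 0$ and $g''(\la)=\frac{-2s^{2}(1-s)}{((1-s)+\la s)^{3}}\leq 0$, using $0\leq s\leq 1$, so $g$ is increasing and concave on $\la>0$.

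The final step is to compose with the outer power. Since $\theta\geq 1$, the function $t\mapsto t^{1/\theta}$ is increasing and concave on $[0,\infty)$, and composing an increasing concave function with the increasing concave $g$ preserves both properties; indeed $\left(g^{1/\theta}\right)''=\frac{1}{\theta}g^{1/\theta-2}\left[\left(\tfrac{1}{\theta}-1\right)(g')^{2}+g\,g''\right]$, and both bracketed terms are non-positive when $\theta\geq 1$. Hence $\la\mapsto\Fb(x;\la,\theta)$ is increasing and concave, condition (i) is satisfied, and the corollary follows from Theorem~\ref{t4}. The only point demanding care — and the place where the hypothesis $\theta\geq 1$ is genuinely used — is this concavity of the composition: for $\theta<1$ the term $\left(\tfrac{1}{\theta}-1\right)(g')^{2}$ becomes positive and concavity may fail, so the restriction $\theta\geq 1$ is exactly what forces both terms in the bracket to remain non-positive.
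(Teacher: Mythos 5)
Your proof is correct and takes essentially the same route as the paper: both reduce the corollary to Theorem~\ref{t4} by verifying that the Harris-family survival function is increasing and concave in $\la$ for $\theta\geq 1$ (condition (i)), with condition (ii) being the stated Schur-concavity assumption. The only difference is organizational — the paper differentiates $\Fb(x;\la,\theta)$ directly in $\la$, while you factor it as the increasing concave power $t\mapsto t^{1/\theta}$ composed with the increasing concave linear-fractional map $g(\la)=\la s/((1-s)+\la s)$, which is a slightly cleaner way to see where $\theta\geq 1$ is used.
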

\begin{proof}
By simplification, the first and second partial derivatives of $\Fb(x;\la,\theta)=\left(\frac{\la \Fb^{\theta}(x)}{1-(1-\la)\Fb^{\theta}(x)}\right)^{1/\theta}$ are obtained as follows:
\begin{eqnarray*}
\frac{\partial \Fb(x;\la,\theta)}{\partial \la}&=&\frac{\Fb(x;\la,\theta)}{\la \left(1-(1-\la) \Fb^{\theta}(x)\right)}\geq 0,\\
\frac{\partial^2 \Fb(x;\la,\theta)}{\partial \la^2}&=&\frac{1-\Fb^{\theta}(x)}{\theta}\frac{\Fb(x;\la,\theta)}{\la^2 \left(1-(1-\la) \Fb^{\theta}(x)\right)^2}\left(\left(1-\Fb^{\theta}(x)\right)\left(\frac{1}{\theta}-1\right)-2\la \Fb^{\theta}(x) \right)\leq 0,
\end{eqnarray*}
which, the first inequality is clear and the second is due to the assumption $\theta\geq 1$. Thus, $\Fb(x;\la,\theta)$ is increasing and concave in $\la$. Hence, in the view of Theorem \ref{t4} the desired result is obtained.
\end{proof}

\begin{corollary}\label{t11}
Let $\Fb(x;\la_i,\theta)=\left(\frac{\la_i \Fb^{\theta}(x)}{1-(1-\la_i)\Fb^{\theta}(x)}\right)^{1/\theta}$, for $\theta\geq 1$ and $i=1,\ldots,n$. Under the setup of Theorem \ref{t4}, suppose that $C_R$ is {\rm PUOD}. Then, we have
\begin{eqnarray*}
\left(\prod\limits_{i=1}^n p_i\right) \left(\frac{\la_{1:n} \Fb^{\theta}(x)}{1-(1-\la_{1:n})\Fb^{\theta}(x)}\right)^{n/\theta} \leq \Gb_{Y_{1:n}}(x)\leq \left(\prod\limits_{i=1}^n p_i\right) \left(\frac{\la_{1:n} \Fb^{\theta}(x)}{1-(1-\la_{1:n})\Fb^{\theta}(x)}\right)^{1/\theta}.
\end{eqnarray*}
\end{corollary}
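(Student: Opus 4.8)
The plan is to read this off as a direct specialization of Theorem \ref{t5} to the Harris marginals, in complete parallel with the way Corollary \ref{t10} specializes Theorem \ref{t4}. First I would verify the two hypotheses of Theorem \ref{t5}. Hypothesis (ii), that $C_R$ is {\rm PUOD}, is assumed outright in the statement. For hypothesis (i), I must check that the Harris survival function $\Fb(x;\la,\theta)=\left(\frac{\la \Fb^{\theta}(x)}{1-(1-\la)\Fb^{\theta}(x)}\right)^{1/\theta}$ is increasing and concave in $\la$ on $\mathbb{R}_+$. This is exactly the derivative computation already performed in the proof of Corollary \ref{t10}: the first partial derivative $\partial \Fb/\partial \la$ is non-negative, while the second partial derivative $\partial^2 \Fb/\partial \la^2$ carries a bracketed factor that is non-positive precisely because $\theta\geq 1$. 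I would simply reuse that verification, so that conditions (i) and (ii) of Theorem \ref{t5} both hold.

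With both hypotheses in place, I would invoke Theorem \ref{t5} directly, obtaining
\begin{equation*}
\left(\prod\limits_{i=1}^n p_i\right) \Fb^n(x;\la_{1:n})\leq \Gb_{Y_{1:n}}(x)\leq \left(\prod\limits_{i=1}^n p_i\right) \Fb(x;\la_{1:n}).
\end{equation*}
The only remaining step is a substitution. Evaluating the Harris survival function at the smallest parameter gives $\Fb(x;\la_{1:n},\theta)=\left(\frac{\la_{1:n} \Fb^{\theta}(x)}{1-(1-\la_{1:n})\Fb^{\theta}(x)}\right)^{1/\theta}$, so the lower bound becomes $\left(\prod_{i=1}^n p_i\right)\left(\frac{\la_{1:n} \Fb^{\theta}(x)}{1-(1-\la_{1:n})\Fb^{\theta}(x)}\right)^{n/\theta}$ and the upper bound becomes $\left(\prod_{i=1}^n p_i\right)\left(\frac{\la_{1:n} \Fb^{\theta}(x)}{1-(1-\la_{1:n})\Fb^{\theta}(x)}\right)^{1/\theta}$, which are exactly the two sides of the claimed inequality.

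There is essentially no genuine obstacle in this corollary; it is a repackaging of Theorem \ref{t5} for Harris-distributed severities, and the whole of its content lies in the monotonicity-and-concavity check. The mildly technical part — the sign of $\partial^2 \Fb/\partial \la^2$, which is what forces the restriction $\theta\geq 1$ — has already been dispatched in Corollary \ref{t10}, so here one may cite it directly. If I preferred a self-contained argument I would reproduce those two derivative computations verbatim, but the substance is identical; the remaining manipulation is only the routine raising of the Harris expression to the powers $n/\theta$ and $1/\theta$.
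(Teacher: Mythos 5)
Your proposal is correct and follows exactly the paper's route: the paper's own proof simply states that conditions (i) and (ii) of Theorem \ref{t5} are satisfied (the concavity check in $\la$ for $\theta\geq 1$ having already been carried out in the proof of Corollary \ref{t10}) and then invokes Theorem \ref{t5}. Your version is just a more explicit write-up of the same argument, including the final substitution of the Harris survival function at $\la_{1:n}$.
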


\begin{proof}
Obviously, the conditions (i) and (ii) of Theorem \ref{t5} are satisfied. Hence, Theorem \ref{t5} completes the proof.
\end{proof}

The following example provides a numerical example to illustrate the validity of corollaries \ref{t10} and \ref{t11}.
\begin{example}\label{ex2}
Let $X_{\la_i}\thicksim \Fb(x;\la_i)=\left(\frac{\la_i e^{-3x^2}}{1-(1-\la_i)e^{-3x^2}}\right)^{1/3} $, for $i = 1, 2,3$, with the associated Clayton copula, which introduced by Clayton\cite{clay}, of the form $$C_{R}(u_1,u_2,u_3)=(u^{-\theta}_1+u^{-\theta}_2+u^{-\theta}_3-2)^{-1/\theta},$$
 where $\theta\in(0,\infty)$. Further, suppose that $I_{p_1}, I_{p_2}, I_{p_3}$ is a set of independent Bernoulli random variables, independent of the $X_{\la_i}$'s, with ${\rm E}[I_{p_i}]=p_i$ , for $i=1,2,3$. We take $(\la_1,\la_2,\la_3)=(3,5,1)$, $(p_1,p_2,p_3)=(0.2,0.3,0.2)$ and $\theta=3$. According to Nelsen\cite{nel}, $C_{R}$ is an Archimedean copula and according to Lemma \ref{l7} is Schur-concave. Also it is a {\rm PUOD} copula. Thus, the conditions of theorems \ref{t10} and \ref{t11} are satisfied. Figure \ref{fig2} represents the plots of the survival function of the smallest claim amounts and the proposed bounds in corollaries \ref{t10} and \ref{t11}.
\end{example}

\begin{figure}[ht]
\centerline{\includegraphics[width=7cm,height=7cm]{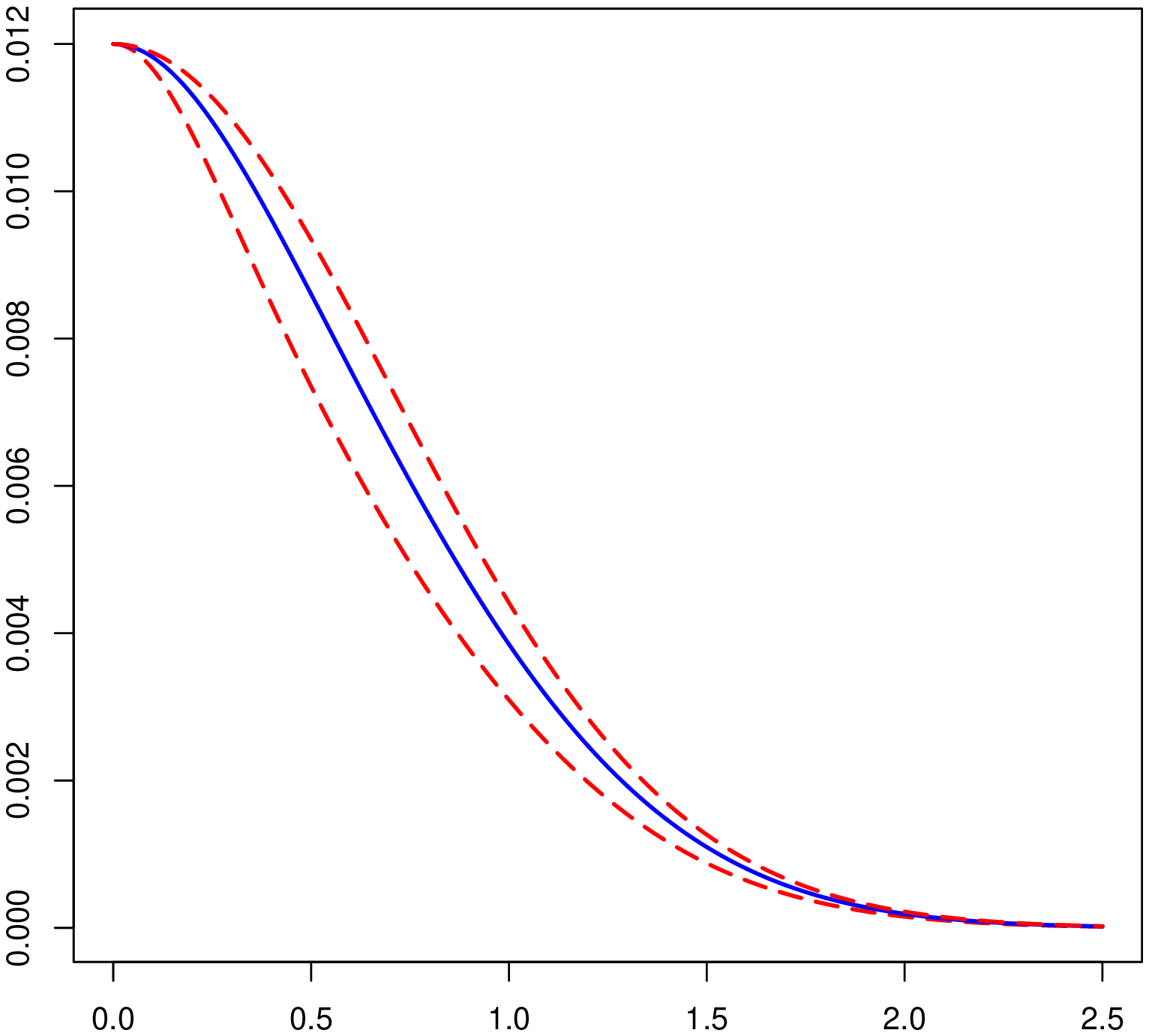}
\includegraphics[width=7cm,height=7cm]{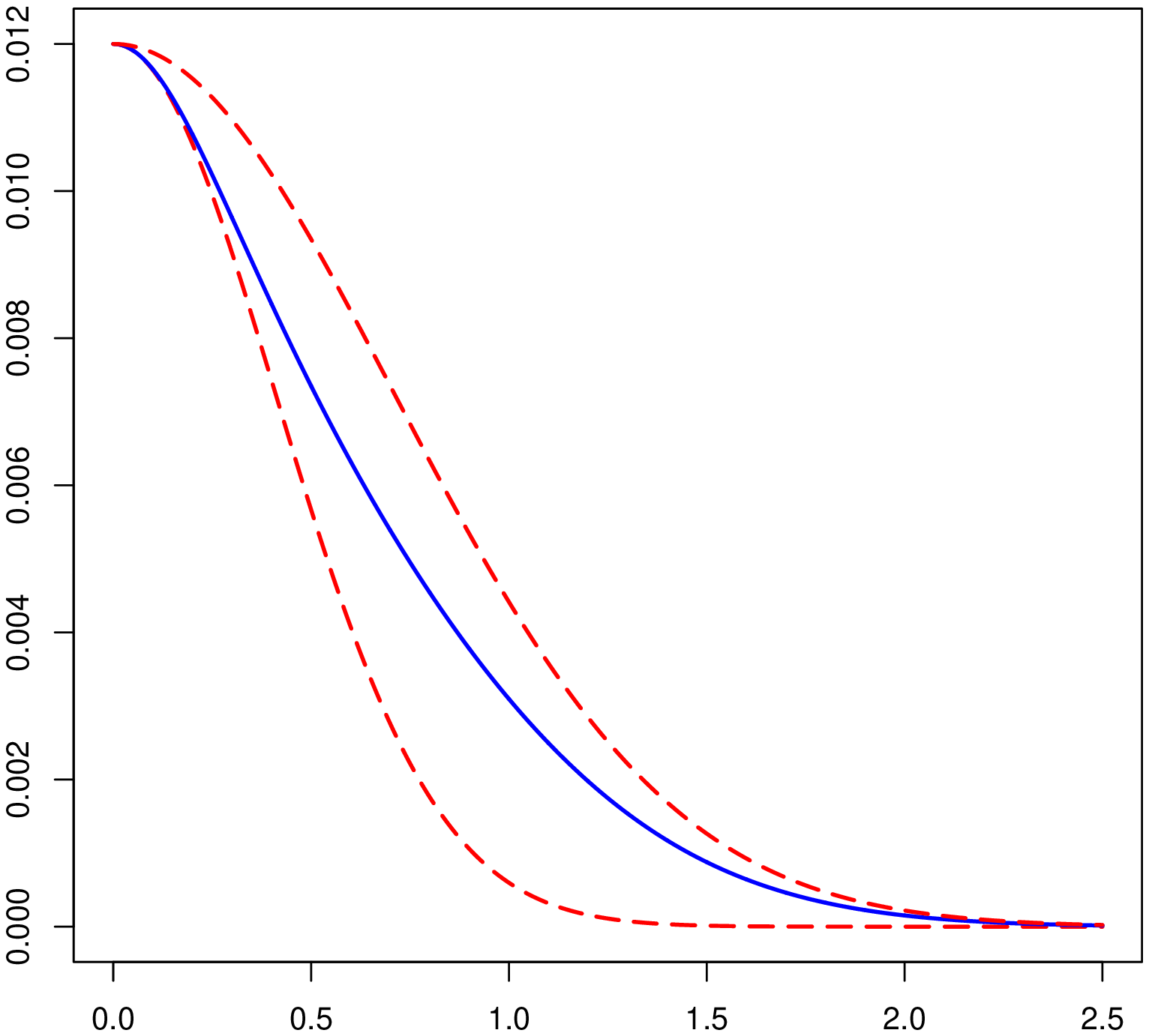}}
\vspace{-0.8cm} \caption{\small{ Plots of the survival function of the smallest claim amounts and the proposed bounds in Corollary \ref{t10} (left) and Corollary \ref{t11} (right) for Example \ref{ex2}.}}\label{fig2}
\end{figure}

Hami Golzar et al.\cite{hami} proposed the Lomax-exponential distribution, which is a proper model for right-skewed, approximately symmetric or reversed-J
shape populations. Due to simplicity and flexibility, it is a good alternative for positive populations and especially claim amounts in portfolios of risks. Recently, Nadeb and Torabi\cite{nato} discussed some stochastic comparisons of series systems with independent heterogeneous Lomax-exponential components. 
$X$ has the Lomax-exponential distribution with the positive parameters $\al$, $\be$ and $\la$, denoted by $X \thicksim {\rm LE}(\al,\beta,\la)$, if its survival function is given by
\begin{equation*}
\Fb(x;\al,\be,\la)=\left(\frac{\la}{e^{\be x}+\la-1}\right)^{\al},\quad x\in \mathbb{R}_+.
\end{equation*}


The following corollaries provide some bounds for the survival function of the smallest
claim amount in a portfolio of risks, when the marginal distributions of severities are Lomax-exponential.

\begin{corollary}\label{t13}
Let $X_{\la_i}\thicksim {\rm LE}(\al,\beta,\la_i)$, for $\al\leq 1$, $\be>0$ and $i=1,\ldots,n$. Under the setup of Theorem \ref{t4}, suppose that $C_R$ is Schur-concave. Then, we have
\begin{eqnarray*}
\left(\prod\limits_{i=1}^n p_i\right) \delta_{C_R}\left(\left(\frac{\la_{1:n}}{e^{\be x}+\la_{1:n}-1}\right)^{\al}\right)\leq \Gb_{Y_{1:n}}(x)\leq \left(\prod\limits_{i=1}^n p_i\right) \delta_{C_R}\left(\left(\frac{\bar{\la}}{e^{\be x}+\bar{\la}-1}\right)^{\al}\right).
\end{eqnarray*}
\end{corollary}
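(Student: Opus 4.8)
The plan is to deduce the corollary directly from Theorem \ref{t4}. Condition (ii) of that theorem is handed to us by hypothesis (namely, $C_R$ is Schur-concave), and the conclusion of Theorem \ref{t4} is exactly the pair of bounds to be established once the Lomax-exponential survival function $\Fb(x;\al,\be,\la)=\left(\frac{\la}{e^{\be x}+\la-1}\right)^{\al}$ is substituted for $\Fb(x;\la)$. So the only genuine work is to verify condition (i): that for each fixed $x\in\mathbb{R}_+$, this survival function is increasing and concave in $\la$ under the standing assumptions $\al\leq 1$ and $\be>0$. This follows the same blueprint already used for the Harris-family Corollary \ref{t10}.

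First I would fix $x\geq 0$ and set $c=e^{\be x}-1$, observing that $c\geq 0$ precisely because $\be>0$ and $x\geq 0$. Writing the survival function as $\left(\frac{\la}{\la+c}\right)^{\al}$ reduces the problem to a one-variable calculus check in $\la>0$. Differentiating once gives $\frac{\partial \Fb}{\partial\la}=\al\left(\frac{\la}{\la+c}\right)^{\al}\frac{c}{\la(\la+c)}\geq 0$, since $\al>0$, $\la>0$ and $c\geq 0$; this settles the monotonicity half of condition (i).

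For the concavity half I would compute the second derivative and factor it so that its sign is transparent. The computation produces a manifestly nonnegative prefactor multiplied by the bracket $(\al-1)c-2\la$. The whole point of the hypothesis $\al\leq 1$ is to force this bracket to be nonpositive: $(\al-1)c\leq 0$ because $\al\leq 1$ and $c\geq 0$, while $-2\la<0$ because $\la>0$, so the bracket is $\leq 0$ and hence $\frac{\partial^2 \Fb}{\partial\la^2}\leq 0$. This establishes concavity in $\la$ and completes the verification of condition (i).

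With both conditions (i) and (ii) in hand, Theorem \ref{t4} applies verbatim, and substituting $\Fb(x;\la_{1:n})$ and $\Fb(x;\bar{\la})$ into its lower and upper bounds yields the displayed inequalities. The main obstacle is purely the bookkeeping in the second-derivative computation together with its sign factorization: the roles of $\al\leq 1$ and of $\be>0$ (the latter guaranteeing $c\geq 0$) are exactly to pin down that sign, so I would take care to isolate the factor $(\al-1)c-2\la$ cleanly rather than grinding through an unfactored expression.
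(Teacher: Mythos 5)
Your proposal is correct and follows essentially the same route as the paper: verify that the Lomax-exponential survival function is increasing and concave in $\la$ via the first and second partial derivatives (your bracket $(\al-1)c-2\la$ with $c=e^{\be x}-1$ is exactly the paper's factor $(\al-1)(e^{\be x}-1)-2\la$), then invoke Theorem \ref{t4}. The only difference is cosmetic: the paper's last line cites Theorem \ref{t3} (apparently a typo), whereas you correctly appeal to Theorem \ref{t4}.
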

\begin{proof}
The first and second partial derivatives of $\Fb(x;\al,\beta,\la)=\left(\frac{\la}{e^{\be x}+\la-1}\right)^{\al}$ are given by
\begin{eqnarray*}
\frac{\partial \Fb(x;\al,\beta,\la)}{\partial \la}&=&\frac{\al (e^{\be x}-1)}{\la (e^{\be x}+\la-1)}\Fb(x;\al,\beta,\la)\geq 0,\\
\frac{\partial^2 \Fb(x;\al,\beta,\la)}{\partial \la^2}&=&\frac{\al (e^{\be x}-1)}{\la^2 (e^{\be x}+\la-1)^2}\Fb(x;\al,\beta,\la)\left((\al-1)(e^{\be x}-1)-2\la\right)\leq 0,
\end{eqnarray*}
which, the first inequality is clear and the second is due to the assumption $\al\leq 1$. Thus, $\Fb(x;\la,\theta)$ is increasing and concave in $\la$. Hence, in the view of Theorem \ref{t3} the desired result is obtained.
\end{proof}

\begin{corollary}\label{t14}
Let $X_{\la_i}\thicksim {\rm LE}(\al,\beta,\la_i)$, for $\al\leq 1$, $\be>0$ and $i=1,\ldots,n$. Under the setup of Theorem \ref{t4}, suppose that $C_R$ is {\rm PUOD}. Then, we have
\begin{eqnarray*}
\left(\prod\limits_{i=1}^n p_i\right) \left(\frac{\la_{1:n}}{e^{\be x}+\la_{1:n}-1}\right)^{n \al} \leq \Gb_{Y_{1:n}}(x)\leq \left(\frac{\la_{1:n}}{e^{\be x}+\la_{1:n}-1}\right)^{\al}.
\end{eqnarray*}
\end{corollary}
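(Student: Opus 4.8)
The plan is to recognize that Corollary \ref{t14} is a direct application of Theorem \ref{t5}, entirely parallel to the argument behind Corollary \ref{t11}. Theorem \ref{t5} needs two hypotheses: (i) that $\Fb(x;\la)$ be increasing and concave in $\la$ for each fixed $x\in\mathbb{R}_+$, and (ii) that the survival copula $C_R$ be {\rm PUOD}. Condition (ii) is granted by assumption, so the whole task is to verify condition (i) for the Lomax-exponential survival function $\Fb(x;\al,\be,\la)=\left(\frac{\la}{e^{\be x}+\la-1}\right)^{\al}$.

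First I would observe that the required monotonicity and concavity in $\la$ have in fact already been established in the proof of Corollary \ref{t13}. There the first partial derivative was computed as
\begin{equation*}
\frac{\partial \Fb(x;\al,\be,\la)}{\partial \la}=\frac{\al(e^{\be x}-1)}{\la(e^{\be x}+\la-1)}\,\Fb(x;\al,\be,\la)\geq 0,
\end{equation*}
and the second partial derivative carries the bracketed factor $(\al-1)(e^{\be x}-1)-2\la$, which is nonpositive precisely when $\al\leq 1$. Hence, under the standing assumption $\al\leq 1$, the survival function $\Fb(x;\al,\be,\la)$ is increasing and concave in $\la$, so condition (i) of Theorem \ref{t5} holds.

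With both hypotheses in place, I would invoke Theorem \ref{t5} directly. Substituting $\Fb(x;\la_{1:n})=\left(\frac{\la_{1:n}}{e^{\be x}+\la_{1:n}-1}\right)^{\al}$ into its bounds gives the lower bound $\left(\prod_{i=1}^n p_i\right)\Fb^n(x;\la_{1:n})=\left(\prod_{i=1}^n p_i\right)\left(\frac{\la_{1:n}}{e^{\be x}+\la_{1:n}-1}\right)^{n\al}$ and the upper bound $\left(\prod_{i=1}^n p_i\right)\Fb(x;\la_{1:n})=\left(\prod_{i=1}^n p_i\right)\left(\frac{\la_{1:n}}{e^{\be x}+\la_{1:n}-1}\right)^{\al}$, which is the claimed assertion (I would flag that the $\prod_{i=1}^n p_i$ factor appears to be dropped from the stated upper bound and should be reinstated for consistency with Theorem \ref{t5}).

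Since the result reduces to a single application of an already-proved theorem, there is no genuine obstacle. The only place that demands care is the sign of the second derivative, and that is exactly where the hypothesis $\al\leq 1$ is used; moreover this computation is shared verbatim with Corollary \ref{t13}, so in the write-up I would simply cite it rather than redo it.
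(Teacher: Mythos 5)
Your proposal is correct and follows essentially the same route as the paper, which simply invokes Theorem \ref{t5} (with the monotonicity and concavity of $\Fb(x;\al,\be,\la)$ in $\la$ for $\al\leq 1$ already verified in the proof of Corollary \ref{t13}). Your observation that the stated upper bound is missing the factor $\prod_{i=1}^n p_i$ is a valid catch --- it is an apparent typo in the corollary's statement, since Theorem \ref{t5} yields $\left(\prod_{i=1}^n p_i\right)\Fb(x;\la_{1:n})$ as the upper bound.
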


\begin{proof}
Applying Theorem \ref{t5}, the desired result is immediately obtained.
\end{proof}

The following example provides a numerical example to illustrate the validity of corollaries \ref{t13} and \ref{t14}.
\begin{example}\label{ex3}
Let $X_{\la_i}\thicksim {\rm LE}(0.1,3,\la_i) $, for $i = 1, 2,3$, with the associated Gumbel-Hougaard copula, of the form $$C_{R}(u_1,u_2,u_3)=\exp\bigg(-\left[(-\log u_1)^{\theta}+(-\log u_2)^{\theta}+(-\log u_3)^{\theta}\right]^{1/\theta}\bigg),$$ where $\theta\in [1,\infty)$. Further, suppose that $I_{p_1}, I_{p_2}, I_{p_3}$ is a set of independent Bernoulli random variables, independent of the $X_{\la_i}$'s, with ${\rm E}[I_{p_i}]=p_i$ , for $i=1,2,3$. We take $(\la_1,\la_2,\la_3)=(0.7,5,0.4)$, $(p_1,p_2,p_3)=(0.1,0.2,0.8)$ and $\theta=2$. According to Nelsen\cite{nel}, $C_{R}$ is an Archimedean copula and according to Lemma \ref{l7} is Schur-concave. Also it is a {\rm PUOD} copula. Thus, the conditions of corollaries \ref{t13} and \ref{t14} are satisfied. Figure \ref{fig3} represents the plots of the survival function of the smallest claim amounts and the proposed bounds in corollaries \ref{t13} and \ref{t14}.
\end{example}

\begin{figure}[ht]
\centerline{\includegraphics[width=7cm,height=7cm]{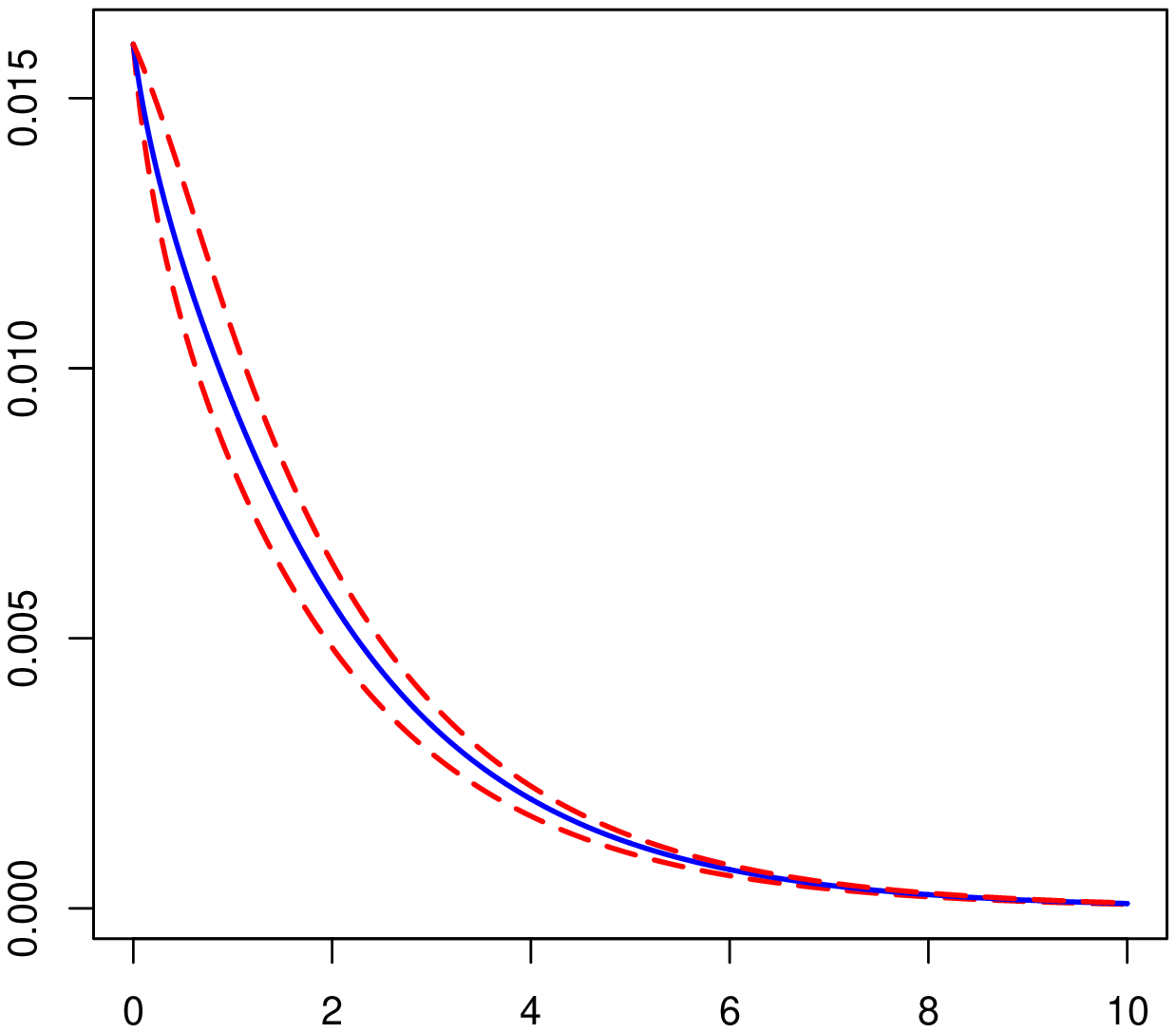}
\includegraphics[width=7cm,height=7cm]{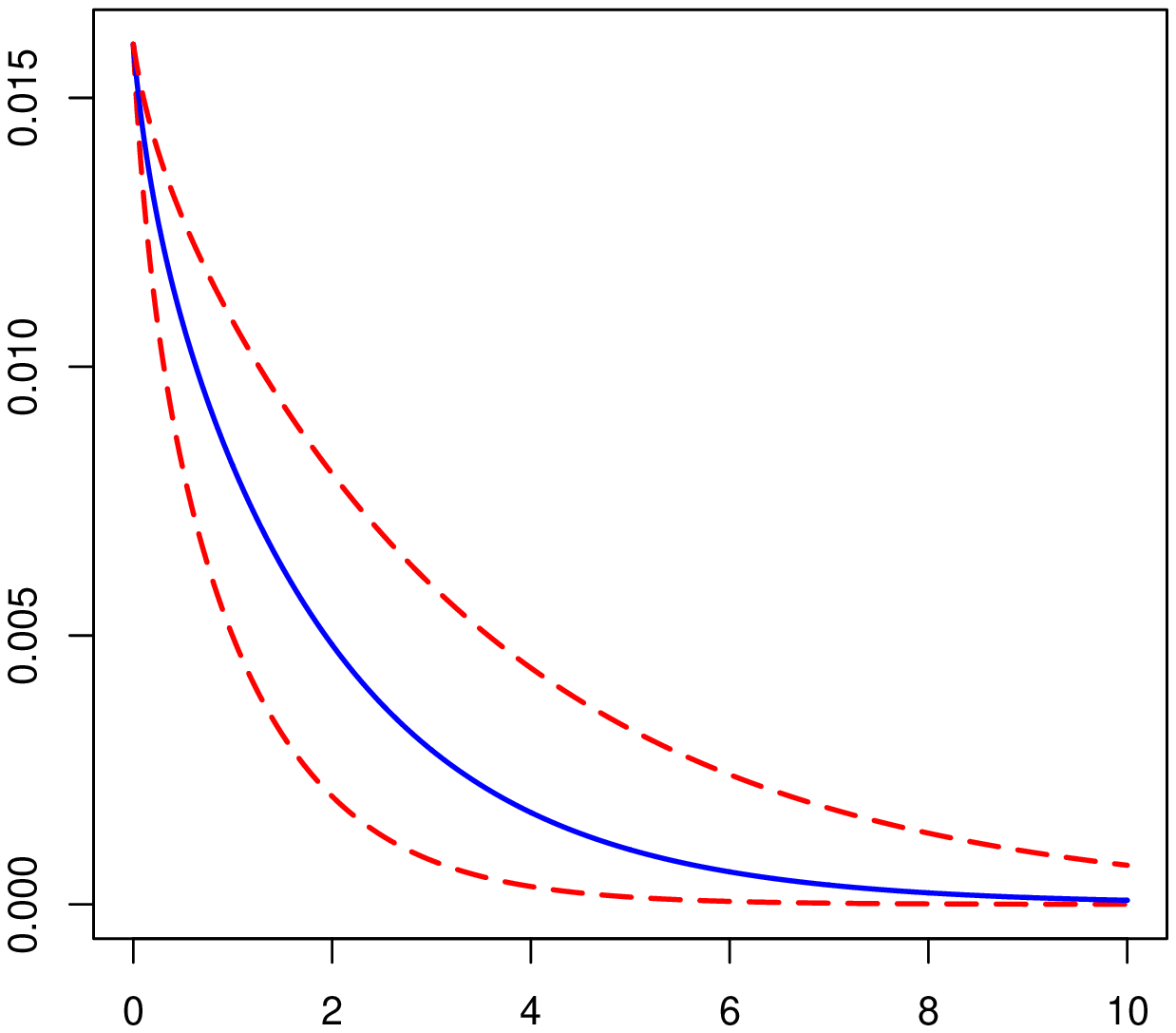}}
\vspace{-0.8cm} \caption{\small{ Plots of the survival function of the smallest claim amounts and the proposed bounds in Corollary \ref{t13} (left) and Corollary \ref{t14} (right) for Example \ref{ex3}.}}\label{fig3}
\end{figure}

\section*{Conclusion}
In this paper, under some certain conditions, we first discussed stochastic comparisons between the
smallest claim amounts under the assumption dependency of severities in the sense of usual and likelihood ratio orders in some general models. Next we present some helpful bounds for the survival function of the smallest claim amount in an interdependent heterogeneous portfolio. Also, some examples are served to illustrate the established results.

\end{document}